\documentclass[11pt]{article}
\usepackage{amsmath, amssymb, amsthm}
\usepackage[square,numbers]{natbib}
\usepackage{graphicx, verbatim}
\usepackage[margin=1in]{geometry}
\usepackage{url}
\usepackage{multirow}
\usepackage{rotating}
\usepackage{threeparttable}
\usepackage{mcode}
\usepackage{listings}
\usepackage{epsfig}
\usepackage{subfig}
\usepackage{bbm}
\usepackage{bm}
\usepackage[mathscr]{euscript}
\usepackage{float}
\usepackage{caption}
\usepackage{mathtools}

\numberwithin{equation}{section}
\newtheorem{thm}{Theorem}  
\newtheorem{prop}[thm]{Proposition}  

\newtheorem{defn}[thm]{Definition}

\setlength\parindent{0pt}

\title{General Price Bounds for Guaranteed Annuity Options}

\author{Raj Kumari Bahl and Sotirios Sabanis}

\date{\today}

\hyphenation{Euro-pean Euro-pe-an Euro-pean-style}

\begin{document}

\maketitle

\begin{abstract}
\pagenumbering{roman}
     In this paper, we are concerned with the valuation of Guaranteed Annuity Options (GAOs) under the most generalised modelling framework where both interest and mortality rates are stochastic and correlated. Pricing these type of options in the correlated environment is a challenging task and no closed form solution exists in the literature. We employ the use of doubly stochastic stopping times to incorporate the randomness about the time of death and employ a suitable change of measure to facilitate the valuation of survival benefit, there by adapting the payoff of the GAO in terms of the payoff of a basket call option. We derive general price bounds for GAOs by utilizing a conditioning approach for the lower bound and arithmetic-geometric mean inequality for the upper bound. The theory is then applied to affine models to present some very interesting formulae for the bounds under the affine set up. Numerical examples are furnished and benchmarked against Monte Carlo simulations to estimate the price of a GAO for a variety of affine processes governing the evolution of mortality and the interest rate.

\bigskip

\noindent {\it Keywords}: Guaranteed Annuity Option (GAO), model-independent bounds,  Affine Processes, interest rate risk,  mortality risk, change of measure, Basket option.

\bigskip

\noindent {\it AMS subject classifications}: Primary 91G20; secondary 60J25.

\end{abstract}

\section{Introduction}
\pagenumbering{arabic}
    In the present era when financial institutions are facing serious challenges in the advent of improving life expectancy, pricing of key products such as `Guaranteed Annuity Options' which involve survival benefit has gained a lot of momentum. It is the need of the hour to equip the longevity product designers with an insight to efficient pricing of these instruments. This involves designing an apparatus that provides state of art solutions to measure the random impulse of mortality, which indeed calls for looking at mortality in a stochastic sense. Till, very lately the conventional approach of actuaries consisted in treating mortality in a deterministic way in contrast to interest rates which were assumed to possess a stochastic nature. Post this came the era of the assumption that mortality evolves in a stochastic manner but is independent of interest rates (see for example \cite{Biffis1}). However, the latter assumption is also far from being realistic. This is because  both extreme mortality events such as catastrophes and pandemics as well as improving life expectancy go a long way in influencing the value of interest rate. While the former shows a stronger effect in a short term, the latter affects the financial market in a gradual manner. Interested readers can refer to \cite{Deelestra}, \cite{Liu1}, \cite{Liu2} and \cite{Jalen} and the references therein. To the best of our knowledge, \cite{Miltersen} were the first ones to introduce dependence between mortality and interest rates in the actuarial world. In the context of the real world, a study by \cite{Nicolini} to understand the relation between these two underlying risks demonstrates that the decline of interest rate in pre-industrial England was perhaps triggered by the decline of adult mortality at the end of the 17th century. More recently \cite{Dacorogna} examine correlation between mortality and market risks in periods of extremes such as a severe pandemic outbreak while \cite{Dacorogna2} explore existence of this dependence within the Feller process framework.

As remarked in the beginning of this section `The Life Expectancy Revolution' has pressurised social security programmes of various nations thereby triggering fiscal crisis for governments who find it hard to fulfill the needs of an ever growing aging population. The price for this imbalance affects the financial markets adversely leading to a downtrend in returns on investments. To take care of these issues, EU's Solvency II Directive has laid out new insurance risk management practices for capital adequacy requirements based on the assumption of dependence between financial markets and life/health insurance markets including the correlation between the two underpinning risks viz. interest rate and mortality (c.f. Quantitative Impact Study 5:Technical Specifications \cite{QIS5}).

In this paper, we consider the most generalised modelling framework where both interest and mortality risks are stochastic and correlated. In a set up similar to \cite{Biffis1}, we advocate the use of doubly stochastic stopping times to incorporate the randomness about the time of death.

We then utilize this set up and the theory of comonotonicity to devise model-independent price bounds for Guaranteed Annuity Options (GAOs). These are options embedded in certain pension policies that provide the policyholders the right to choose between a lump sum at time of retirement/maturity or to convert the proceeds into an annuity at a guaranteed rate. The reports of the Insurance Institute of London (1972) (c.f. \cite{IIL}) show that the origin of GAOs dates back to 1839. However these instruments came into the limelight in UK in the era of 1970-1990. In the advent of increased life expectancy, the research on pricing of GAOs has gained a lot of momentum as the underpricing of such guarantees has already caused serious solvency problems to insurers, for example in the UK, as an after effect of encashment of too many GAOs, the world's oldest life insurer - Equitable Life had to close to new business in 2000.

The existing literature in the direction of pricing of GAO's under the correlation assumption is very thin and only Monte Carlo estimation of the GAO price is available for sophisticated models (c.f. \cite{Deelestra}). But Monte Carlo method is generally extremely time consuming for complex models (c.f. \cite{Runhuan}). This article is a concrete step in the direction of pricing of GAOs under the correlation direction. It investigates the designing of price bounds for GAO's under the assumption of dependence between mortality and interest rate risks and provides a much needed confidence interval for the pricing of these options. Moreover the proposed bounds are model-free or general in the sense they are applicable for all kinds of models and in particular suitable for the affine set up. Keeping pace with the relevant literature (c.f. \cite{Liu2}, \cite{Deelestra}), we applied a change of probability measure with the `Survival Zero Coupon Bond' as num\'{e}raire for the valuation of the GAO. This change of measure facilitates computation and enhances efficiency (c.f. \cite{Liu1}).  The organization of the paper follows. In section 2 we introduce the market framework with the necessary notations. In section 3 we define GAOs and show that their payoff is similar to that of a basket option. This is followed by Section 4 which highlights the technicalities of affine processes. Sections 5 and 6 are the core sections which present details on finding lower and upper bounds for GAOs. In section 7 we present examples while numerical investigations in support of the developed theory appear in Section 8. Section 9 then concludes the paper.

\section{The Market Framework}
In this section, we introduce the necessary set up required to construct the mathematical interplay between financial market and the mortality model. We denote by $\mathbb{P}$, the physical world measure and we utilize the fact that in the absence of arbitrage, at least one equivalent martingale measure (EMM) $\mathbb{Q}$ exists. We consider a filtered probability space $\left(\Omega, \mathscr{F}, \mathbb{F}, \mathbb{P}\right)$ where $\mathbb{F}=\{\mathscr{F}_{t}\}_{t\geq 0}$ such that the filtration is large enough to support a  process $X$ in $\mathbb{R}^{k}$, representing the evolution of financial variables and a process $Y$ in $\mathbb{R}^{d}$, representing the evolution of mortality. We take as given an adapted short rate process $r=\{r_{t}\}_{t\geq 0}$ such that it satisfies the technical condition $\int_{0}^{t} r_{s}ds<\infty$ a.s. for all $t\geq 0$. The short rate process $r$ represents the continuously compounded rate of interest of a risk-less security. Moreover, we concentrate on an insured life aged $x$ at time 0, with random residual lifetime denoted by $\tau_{x}$ which is an $\mathscr{F}_{t}$-stopping time.

The filtration $\mathbb{F}$ includes knowledge of the evolution of all state variables up to each time t and of whether the policyholder has died by that time. More explicitly, we have:
\[
\mathscr{F}_{t}=\mathscr{G}_{t} \vee \mathscr{H}_{t}
\]
where
\[
\mathscr{G}_{t} \vee \mathscr{H}_{t}=\sigma\left(\mathscr{G}_{t} \cup \mathscr{H}_{t}\right)
\]
with
\[\mathscr{G}_{t}=\sigma\left(Z_{s}:\;0\leq s\leq t\right),\;\;\;\mathscr{H}_{t}=\sigma\left(\mathbbm{1}_{\{\tau\leq s\}}:\;0\leq s\leq t\right)
\]
and where $Z=\left(X,Y\right)$ is the joint state variables process in $\mathbb{R}^{k+d}$. Thus we have
\[
\mathscr{G}_{t}=\mathscr{G}_{t}^{X} \vee \mathscr{G}_{t}^{Y}.
\]
In fact $\mathbb{H}=\{\mathscr{H}_{t}\}_{t\geq 0}$ is the smallest filtration with respect to which $\tau$ is a stopping time. In other words $\mathbb{H}$ makes $\mathbb{F}$ the smallest enlargement of $\mathbb{G}=\{\mathscr{G}_{t}\}_{t\geq 0}$ with respect to which $\tau$ is a stopping time, i.e.,
\[
\mathscr{F}_{t}=\cap_{s>t}\mathscr{G}_{s} \vee \sigma\left(\tau\wedge s\right), \;\forall t.
\]
We may think of $\mathscr{G}_{t}$ as carrying information captured from medical/demographical data collected at population/ industry level and of $\mathscr{H}_{t}$ as recording the actual occurence of death in an insurance portfolio.

To make the set up more robust, we assume that $\tau_{x}$ is the first jump-time of a nonexplosive $\mathscr{F}_{t}$-counting process $N$ recording at each time $t \geq 0$ whether the individual has died $\left(N_{t} \neq 0\right)$ or not $(N_{t} = 0)$. The stopping time $\tau_{x}$ is said to admit an intensity $\mu_{x}$ if $N$ does, i.e. if $\mu_{x}$ is a non-negative predictable process such that $\int_{0}^{t}\mu_{x}\left(s\right)ds<\infty$ a.s. for all $t\geq 0$ and such that the compensated process $M = \{N_{t}-\int_{0}^{t}\mu_{x}\left(s\right)ds:t\geq 0\}$ is a local $\mathscr{F}_{t}$-martingale. Our next assumption is that $N$ is a doubly stochastic process or Cox Process driven by a subfiltration
$\mathscr{G}_{t}$ of $\mathscr{F}_{t}$, with $\mathscr{G}_{t}$-predictable intensity $\mu$. This implies that on any particular trajectory $t \mapsto \mu_{t}\left(\omega\right)$ of $\mu$, the counting process $N$ is a Poisson-inhomogeneous process with parameter $\int_{0}^{.}\mu_{s}\left(\omega\right)ds$, i.e., we have that for all $t \in \left[0,T\right]$ and non-negative integer $k$,
\begin{equation}\label{2.1abcde}
\mathbb{P}\left(N_{T}-N_{t} = k|\mathscr{F}_{t} \vee \mathscr{G}_{T}\right)=\frac{\left(\int_{t}^{T}\mu_{s}ds\right)^{k}}{k!}e^{-\int_{t}^{T}\mu_{s}ds}.
\end{equation}
The main reason for the consideration of a strict subfiltration $\mathscr{G}_{T}$ of $\mathscr{F}_{t}$ is that it provides enough information about the evolution of the intensity of mortality, i.e., about the likelihood of death happening, but not enough information about the actual occurrence of death. Such information is carried by the larger filtration $\mathscr{F}_{t}$, with respect to which $\tau$ is a stopping time. From \eqref{2.1abcde} by putting $k = 0$, we now proceed to compute the `probability of survival' up to time $T \geq t$, on the set $\{\tau > t\}$. Let $A$ be the event of no death in the interval $t \in \left[0,T\right]$, i.e., $A\equiv\{N_{T}-N_{t} = 0\}$, then the tower property of conditional expectation tells us that
\begin{eqnarray}\label{2.1abcdef}
\mathbb{P}\left(\tau > T|\mathscr{F}_{t}\right) & = & E[\mathbbm{1}_{A}|\mathscr{F}_{t}] \nonumber\\
& = & E\left[E\left(\mathbbm{1}_{A}|\mathscr{F}_{t} \vee \mathscr{G}_{T}\right)|\mathscr{F}_{t}\right] \nonumber\\
& = & E\left[\mathbb{P}\left(N_{T}-N_{t} = 0|\mathscr{F}_{t} \vee \mathscr{G}_{T}\right)|\mathscr{F}_{t}\right] \nonumber\\
& = & E\left[e^{-\int_{t}^{T}\mu_{s}ds}|\mathscr{F}_{t}\right].
\end{eqnarray}

In fact, we characterize the conditional law of $\tau$ in several steps. Given the non-negative $\mathscr{G}_{t}$-predictable process $\mu$ is satisfying $\int_{0}^{t}\mu_{x}\left(s\right)ds<\infty$ a.s. for all $t>0$, we consider an exponential random variable $\Phi$ with parameter 1, independent of $\mathscr{G}_{\infty}$ and define the random time of death $\tau$ as the first time when the process $\int_{0}^{t}\mu_{s}ds$ is above the random threshold $\Phi$, i.e.,
\begin{equation}\label{2.1abcdefg}
\tau \doteq \{t\in \mathbb{R}^{+}: \int_{0}^{t}\mu_{s}\left(s\right)ds \geq \Phi\}.
\end{equation}
It is evident from \eqref{2.1abcdefg} that $\{\tau > T\}=\{\int_{0}^{t}\mu_{s}ds < \Phi\}$, for $T\geq 0$. Next, we work out $\mathbb{P}\left(\tau > T|\mathscr{G}_{t}\right)$ for $T\geq t\geq 0$ by using tower property of conditional expectation, independence of $\Phi$ and $\mathscr{G}_{\infty}$ and facts that $\mu$ is a $\mathscr{G}_{t}$-predictable process and $\Phi \sim Exponential\left(1\right)$, i.e.,
\begin{equation}\label{2.2ab}
\mathbb{P}\left(\tau > T|\mathscr{G}_{t}\right) =  E\left[e^{-\int_{0}^{T}\mu_{s}ds}|\mathscr{G}_{t}\right].
\end{equation}
In fact, the same result holds for $0 \leq T < t$. Further, we observe that $\{\tau > t\}$ is an atom of $\mathscr{H}_{t}$. As a result, in a manner similar to \cite{Biffis1}, we have constructed a doubly stochastic $\mathscr{F}_{t}$-stopping time driven by $\mathscr{G}_{t} \subset \mathscr{F}_{t}$ in the following way (c.f. \cite{billie}, ex 34.4, p.455):
\begin{eqnarray}\label{2.2abc}
\mathbb{P}\left(\tau > T|\mathscr{G}_{T} \vee \mathscr{F}_{t}\right) & = & \mathbbm{1}_{\{\tau > t\}}E\left[\mathbbm{1}_{\{\tau > T\}}|\mathscr{G}_{T} \vee \mathscr{H}_{t}\right] \nonumber\\
& = & \mathbbm{1}_{\{\tau > t\}}e^{-\int_{t}^{T}\mu_{s}ds}.
\end{eqnarray}
Next, the conditioning on $\mathscr{F}_{t}$ can be replaced by conditioning on $\mathscr{G}_{t}$ as shown in the Appendix C of \cite{Biffis1}.

We remark that, we do not take $\mathscr{G}_{t} \vee \sigma(\Phi)$ as our filtration $\mathscr{G}_{t}$ because, in that case, the stopping time $\tau$ would be predictable and would not admit an intensity. The construction potrayed here guarantees that $\tau$ is a totally inaccessible stopping time, a concept intuitively meaning that the insured’s death arrives as a total surprise to the insurer (see \cite{Protter}, Chapter III.2, for details). With this, we move to the focal point of this paper viz. GAOs.

\section{Guaranteed Annuity Options}
\subsection{Introduction}
A Guaranteed Annuity Option(GAO) is a contract that gives the policyholder the flexibility to convert his/her survival benefit into an annuity at a pre-specified conversion rate. The guaranteed conversion rate denoted by $g$, can be quoted as an annuity/cash value ratio. According to \cite{Bolton}, the most popular choice for for the guaranteed conversion rate $g$ for males aged 65 in the UK in the 1980s was $g=\frac{1}{9}$, which means that per \pounds 1000 cash value can be converted into an annuity of \pounds 111 per annum. The GAO would have a positive value if the guaranteed conversion rate is higher than the available conversion rate; otherwise the GAO is worthless since the policyholder could use the cash to obtain higher value of annuity from the primary market. As a result, the moneyness of the GAO at maturity depends on the price of annuity available in the market at that time and this in turn is calculated using the prevailing interest and mortality rates.

\subsection{Mathematical Formulation}
Consider an $x$ year old policyholder at time 0 who has an access to a unity amount at his retirement age $R_{x}$. Then, a GAO gives the policyholder a choice to choose at time $T=R_{x}-x$ between an annual payment of $g$ or a cash payment of 1. Let $\ddot{a}_{x}\left(T\right)$ denote a whole life annuity due for a person aged $x$ at time 0, which gives an annual payment of one unit amount at the start of each year, this payment beginning from time $T$ and conditional on survival. If $w$ is the largest possible survival age then we have

\begin{eqnarray}\label{5.1}
\ddot{a}_{x}\left(T\right) & = & \sum_{j=0}^{w-\left(T+x\right)-1}\mathbb{E}\left[e^{-\int_{T}^{T+j}\left(r_{s}+\mu_{s}\right)ds}|\mathscr{G}_{T}\right] \nonumber\\
& = & \sum_{j=0}^{w-\left(T+x\right)-1} \tilde{P}\left(T,T+j\right),
\end{eqnarray}
where
\begin{equation}\label{4.3}
\tilde{P}\left(t,T\right)=\mathbb{E}\left[e^{-\int_{t}^{T}\left(r_{s}+\mu_{s}\right)ds}|\mathscr{G}_{t}\right]
\end{equation}
denotes the price at time t of a pure endowment insurance with maturity $T$ for an insured of age $x$ at time 0 who is still alive at time $t$. This insurance instrument is nomenclated as a \emph{survival zero-coupon bond} abbreviated as SZCB by \cite{Deelestra} and the authors remark that it can be used as a num\`eraire because it can be replicated by a strategy that involves longevity bonds (c.f. \cite{Lin2}) in analogy with the usual bootstrapping methodology used to find the zero rate curve starting by coupon bonds. This insurance instrument pays one unit of money at time $T$ upon the survival of the insured at that time. In fact $r+\mu$ can be viewed as a fictitious short rate or yield to compare these instruments with their financial counterparts.

At time $T$, the value of the contract having the above embedded GAO can be described by the following decomposition
\begin{eqnarray}\label{5.2}
V\left(T\right)& = & \max(g \ddot{a}_{x}\left(T\right), 1) \nonumber\\
& = & 1 + g \max\left(\ddot{a}_{x}\left(T\right)-\frac{1}{g}\right).
\end{eqnarray}

In order to apply risk neutral evaluation, we state a result from \cite{Biffis1} to compute the fair values of a basic payoff involved by standard insurance contracts. These are benefits, of amount possibly linked to other security prices, contingent on survival over a given time period. We require the short rate process $r$ and the intensity of mortality $\mu$ to satisfy the technical conditions stated in Section 2.

\begin{prop}(Survival benefit). Let $C$ be a bounded $\mathscr{G}_{t}$-adapted process. Then, the time-$t$ fair value $SB_{t}\left(C_{T} ; T\right)$ of the time-$T$ survival benefit of amount $C_{T}$, with $0 \leq t \leq T$ , is given by:
\begin{equation}\label{3.1ab}
SB_{t}\left(C_{T} ; T\right) = \mathbb{E}\left[e^{-\int_{t}^{T}r_{s}ds}\mathbbm{1}_{\{\tau > T\}}C_{T}|\mathscr{F}_{t}\right]=\mathbbm{1}_{\{\tau > t\}}\mathbb{E}\left[e^{-\int_{t}^{T}\left(r_{s}+\mu_{s}\right)ds}C_{T}|\mathscr{G}_{t}\right]
\end{equation}
In particular, if $C$ is $\mathscr{G}_{t}^{X}$-adapted and $X$ and $Y$ are independent, then, the following holds
\begin{equation}\label{3.1abc}
SB_{t}\left(C_{T} ; T\right) =\mathbbm{1}_{\{\tau > t\}}\mathbb{E}\left[e^{-\int_{t}^{T}r_{s}ds}C_{T}|\mathscr{G}_{t}^{X}]\mathbb{E}[e^{-\int_{t}^{T}\mu_{s}ds}|\mathscr{G}_{t}^{Y}\right]
\end{equation}
\end{prop}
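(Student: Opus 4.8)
\medskip

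The plan is to move from the left-hand expression of \eqref{3.1ab} to the right-hand one by a single use of the tower property, followed by the doubly stochastic identity \eqref{2.2abc} and a passage from $\mathscr{F}_t$- to $\mathscr{G}_t$-conditioning on the survival set. Throughout I use the standing assumptions of Section 2, under which $r$ and $\mu$ are $\mathscr{G}$-adapted, so that $e^{-\int_t^T r_s\,ds}$, $e^{-\int_t^T\mu_s\,ds}$ and $C_T$ are all $\mathscr{G}_T$-measurable.

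First I would insert the intermediate $\sigma$-algebra $\mathscr{G}_T\vee\mathscr{F}_t$:
\begin{equation*}
SB_t(C_T;T)=\mathbb{E}\!\left[e^{-\int_t^T r_s\,ds}\,\mathbbm{1}_{\{\tau>T\}}\,C_T\,\big|\,\mathscr{F}_t\right]=\mathbb{E}\!\left[e^{-\int_t^T r_s\,ds}\,C_T\,\mathbb{E}\!\left[\mathbbm{1}_{\{\tau>T\}}\,\big|\,\mathscr{G}_T\vee\mathscr{F}_t\right]\,\big|\,\mathscr{F}_t\right],
\end{equation*}
where the $\mathscr{G}_T$-measurable factor $e^{-\int_t^T r_s\,ds}C_T$ has been taken out of the inner conditional expectation. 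By \eqref{2.2abc} the inner term equals $\mathbbm{1}_{\{\tau>t\}}e^{-\int_t^T\mu_s\,ds}$, and since $\{\tau>t\}\in\mathscr{H}_t\subset\mathscr{F}_t$ the indicator $\mathbbm{1}_{\{\tau>t\}}$ is $\mathscr{F}_t$-measurable and can be pulled out of the outer expectation, yielding
\begin{equation*}
SB_t(C_T;T)=\mathbbm{1}_{\{\tau>t\}}\,\mathbb{E}\!\left[e^{-\int_t^T(r_s+\mu_s)\,ds}\,C_T\,\big|\,\mathscr{F}_t\right].
\end{equation*}

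The only delicate point is to replace the conditioning on $\mathscr{F}_t$ in this last display by conditioning on $\mathscr{G}_t$ on the set $\{\tau>t\}$. For this I would invoke the key lemma for the progressively enlarged filtration $\mathscr{F}_t=\mathscr{G}_t\vee\mathscr{H}_t$: for any integrable $\mathscr{G}_\infty$-measurable $W$,
\begin{equation*}
\mathbbm{1}_{\{\tau>t\}}\,\mathbb{E}[W\,|\,\mathscr{F}_t]=\mathbbm{1}_{\{\tau>t\}}\,\frac{\mathbb{E}[\mathbbm{1}_{\{\tau>t\}}W\,|\,\mathscr{G}_t]}{\mathbb{P}(\tau>t\,|\,\mathscr{G}_t)}.
\end{equation*}
Applying this with $W=e^{-\int_t^T(r_s+\mu_s)\,ds}C_T$ and using the Cox construction of $\tau$, one has $\mathbb{E}[\mathbbm{1}_{\{\tau>t\}}\mid\mathscr{G}_\infty]=e^{-\int_0^t\mu_s\,ds}$, hence $\mathbb{E}[\mathbbm{1}_{\{\tau>t\}}W\mid\mathscr{G}_t]=\mathbb{E}[W\,e^{-\int_0^t\mu_s\,ds}\mid\mathscr{G}_t]$, while $\mathbb{P}(\tau>t\mid\mathscr{G}_t)=e^{-\int_0^t\mu_s\,ds}$ by \eqref{2.2ab}; since $\int_0^t\mu_s\,ds$ is $\mathscr{G}_t$-measurable, the two exponential factors cancel and \eqref{3.1ab} follows. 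Equivalently, one may simply cite Appendix C of \cite{Biffis1}, where this reduction is carried out in the present setting.

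Finally, for \eqref{3.1abc}: when $C$ is $\mathscr{G}^X$-adapted and $X$ is independent of $Y$, the variable $e^{-\int_t^T r_s\,ds}C_T$ is $\mathscr{G}_T^X$-measurable and $e^{-\int_t^T\mu_s\,ds}$ is $\mathscr{G}_T^Y$-measurable, with $\mathscr{G}_T^X$ and $\mathscr{G}_T^Y$ independent; conditioning the product on $\mathscr{G}_t=\mathscr{G}_t^X\vee\mathscr{G}_t^Y$ then factorises into $\mathbb{E}[e^{-\int_t^T r_s\,ds}C_T\mid\mathscr{G}_t^X]$ times $\mathbb{E}[e^{-\int_t^T\mu_s\,ds}\mid\mathscr{G}_t^Y]$, which is the claim. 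I expect the main obstacle to be the rigorous justification of the $\mathscr{F}_t\to\mathscr{G}_t$ reduction, since this is the one step where the doubly stochastic construction of $\tau$ is genuinely needed rather than just the stopping-time property; everything else is routine bookkeeping with the tower property and measurability.
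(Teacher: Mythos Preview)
Your argument is correct and follows the standard route: tower with the intermediate $\sigma$-algebra $\mathscr{G}_T\vee\mathscr{F}_t$, apply the doubly stochastic identity \eqref{2.2abc}, then use the key lemma for the progressive enlargement $\mathscr{F}_t=\mathscr{G}_t\vee\mathscr{H}_t$ to pass from $\mathscr{F}_t$- to $\mathscr{G}_t$-conditioning on $\{\tau>t\}$; the independence factorisation for \eqref{3.1abc} is routine. The paper itself does not prove this proposition but simply refers the reader to \cite{Biffis1}, so there is nothing to compare against beyond noting that your proof is precisely the argument carried out in Appendix~C of that reference, which you already cite.
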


\begin{proof}
A comprehensive proof can be found in \cite{Biffis1}.
\end{proof}

Thus, we have the value at time $t = 0$ of the second term in \eqref{5.3}, which is called the GAO option price entered by an $x$-year policyholder at time $t = 0$ as
\begin{equation}\label{5.3}
C(0, x, T )=\mathbb{E}\left[e^{-\int_{0}^{T}\left(r_{s}+\mu_{s}\right)ds}g\left(\ddot{a}_{x}\left(T\right)-\frac{1}{g}\right)^{+}\right].
\end{equation}
In order to facilitate calculation, we adopt the following change of measure.

\subsection{Change of Measure}
We advocate a change of measure similar to the one adopted in \cite{Deelestra}. We define a new probability measure $\tilde{Q}$ with the Radon-Nikodym derivative of $\tilde{Q}$ w.r.t $\mathbb{Q}$ as:
\begin{equation}\label{4.1}
\frac{d\tilde{Q}}{d\mathbb{Q}} := \eta_{T}=\frac{e^{-\int_{0}^{T}\left(r_{s}+\mu_{s}\right)ds}}{\mathbb{E}\left[e^{-\int_{0}^{T}\left(r_{s}+\mu_{s}\right)ds}\right]}
\end{equation}
where $\mathbb{E}$ denotes the usual expectation w.r.t the EMM $\mathbb{Q}$ and we will use $\tilde{E}$ to denote the expectation w.r.t the new probability measure $\tilde{Q}$. Further on using Bayes' Rule for conditional expectation, the survival benefit in \eqref{3.1ab} can be rewritten as
\begin{equation}\label{4.2}
SB_{t}\left(C_{T} ; T\right) = \mathbbm{1}_{\{\tau > t\}}\tilde{P}\left(t,T\right)\tilde{E}\left[C_{T}|\mathscr{G}_{t}\right]
\end{equation}

The advantage of the change of measure approach is that the complex expectation appearing in the survival benefit given in \eqref{3.1ab} has been decomposed into two simpler expectations: the first one corresponds to the price of the SZCB given in \eqref{4.3} and the second one is connected to the expected value of the survival benefit $C_{T}$ under the new probability measure $\tilde{Q}$ which needs to be determined. In the passing, one notes that in \eqref{4.2} if $C_{T}=1$, we get a very interesting relationship
\begin{equation}\label{4.4}
SB_{t}\left(1 ; T\right) = \mathbbm{1}_{\{\tau > t\}}\tilde{P}\left(t,T\right).
\end{equation}
In particular
\begin{equation}\label{4.5}
SB_{0}\left(1 ; T\right) = \mathbbm{1}_{\{\tau > t\}}\tilde{P}\left(0,T\right).
\end{equation}

A similar change of measure has been employed by \cite{Liu1} and \cite{Liu2} with the only difference that they use the unitary survival benefit given in \eqref{4.4} as the num\`eraire. On the contrary, \cite{Jalen} have used a twin change of measure to compute value of a GAO.

\subsection{Payoff}

Under the new probability measure $\tilde{Q}$ defined in \eqref{4.1}, the GAO option price decomposes into the following product
\begin{equation}\label{5.4}
C(0, x, T ) = g \tilde{P}\left(0,T\right)\tilde{E}\left[\left(\ddot{a}_{x}\left(T\right)-\frac{1}{g}\right)^{+}\right]
\end{equation}
where $\tilde{P}\left(0,T\right)$ is defined in \eqref{4.3}. To develop ideas further, we express the payoff in a more appealing form as follows:
\begin{equation}\label{5.5}
C(0, x, T ) = g \tilde{P}\left(0,T\right){\displaystyle\tilde{E}}\left[\left(\sum_{i=1}^{n-1}S_{T}^{\left(i\right)}-\left(K-1\right)\right)^{+}\right]
\end{equation}
where we utilize the fact that $\tilde{P}\left(T,T\right)=1$ and define $n=w-\left(T+x\right)$ and
\begin{equation}\label{5.6}
S_{T}^{\left(i\right)}=\tilde{P}\left(T,T+i\right);\;i=1,2,...,n-1.
\end{equation}
The last term on the R.H.S in the payoff of the GAO resembles the payoff of a basket option having unit weights and the SZCBs, maturing at times $T+1, T+2,...,w-x-1$ acting as the underlying assets.
We seek to evaluate tight model-independent bounds for the GAOs in the ensuing sections. To the best of our knowledge, the equations \eqref{5.3} and \eqref{5.4} have only been valued by Monte Carlo simulations for specific choice of models. In \cite{Liu1}, numerical experiments in the Gaussian setting have shown that \eqref{5.4} is a little bit more precise and in particular it is less time consuming than the implementation of \eqref{5.3}. \cite{Deelestra} have investigated these calculations for different affine models such as the multi-CIR and the Wishart cases. \cite{Liu2} have computed very specific comonotonic bounds for GAOs in the Gaussian framework.

\section{Affine Processes}
Affine processes are essentially Markov processes with conditional characteristic function of the affine form. A thorough discussion of these processes on canonical state space appears in \cite{Duffie2} and \cite{Filipovic}. More recently the development of multivariate stochastic volatility models has lead to the evolution of applications of affine processes on non-canonical state spaces, in particular on the cone of positive semi-definite matrices. A plethora of research papers are available to explore and interested readers can refer to \cite{Cuchiero1} for details. A unified approach on affine processes is presented in \cite{Keller2} and following this approach we recall the details of the affine processes in the Appendix A. In regards to the evolution of interest rates and the force of mortality we consider a set up similar to \cite{Deelestra}.

Suppose we have a time-homogeneous affine Markov process $X$ taking values in a non-empty convex subset $E$ of $\mathbb{R}^{d}$,  $\left(d\geq 1\right)$ equipped with the inner product $\langle\cdot,\cdot\rangle$. We then assume that the dynamics of the interest rate and force of mortality are given respectively as follows.
\begin{equation}\label{51.11}
r_{t}=\bar{r}+\langle R,X_{t}\rangle
\end{equation}
and
\begin{equation}\label{51.12}
\mu_{t}=\bar{\mu}+\langle M,X_{t}\rangle
\end{equation}
where $\bar{r}, \bar{\mu} \in \mathbb{R}$, $M, R \in \mathbb{R}_{d} \mbox{ or } M_{d}$ where $M_{d}$ is the set of real square matrices of order $d$.

This means that the interest rate and mortality are linear projections of the common stochastic factor $X$ along constant directions given by the parameter $R$ and $M$ respectively. We will be interested in the cases where the $X$ is a classical affine process on the state space $\mathbb{R}^{m}_{+}\times \mathbb{R}^{n}$ or an affine Wishart process on the state space $S_{d}^{+}$, which is the set of $d\times d$ symmetric positive definite matrices. The inner product possesses the flexibility to condense into scalar product or trace depending on the nature of $R$ and $M$ being respectively vectors or matrices. In the former set up we consider multi-dimensional CIR case (c.f. \cite{CIR}). In the case of Vasicek model (c.f. \cite{Vasicek}), the affine set up is uni-dimensional. A very good reference to show that the stochastic processes underlying the Vasicek and CIR models fall under the affine set up is \cite{martink7}.

In the passing it is important to note that the affiness of the underlying model is preserved as we move from the physical world to the the risk neutral environment, although new affine dynamics emerge (c.f. \cite{Biffis2} and \cite{Duffie1}). In fact, more recently \cite{Dhaene} examine the conditions under which it is possible or not to translate the independence assumption from the physical world to the pricing world.

We now state without proof the following proposition which presents the methodology to value SZCBs and in turn GAOs. A detailed proof appears in \cite{Gnoatto1} and the necessary notations are defined in the Appendix A.
\begin{prop}\label{411.7}
Let $X$ be a conservative affine process on $S_{d}^{+}$ under the risk neutral measure $\mathbb{Q}$. Let the short rate be given in accordance with \eqref{51.11}. Let $\tau=T-t$, then the price of a zero-coupon bond is given by
{
\allowdisplaybreaks
\begin{eqnarray}\label{51.122}
\tilde{P}\left(t, T\right) & = & \mathbb{E}\left[e^{-\int^{T}_{t}\left(\bar{r}+\bar{\mu}+\langle R+M,X_{u}\rangle\right) du}|\mathscr{F}_{t}\right]\nonumber\\
& = & e^{-\left(\bar{r}+\bar{\mu}\right)\tau}e^{-\tilde{\phi}\left(\tau,R+M\right)-\langle \tilde{\psi}\left(\tau,R+M\right),X_{t}\rangle},
\end{eqnarray}
}
where $\tilde{\phi}$ and $\tilde{\psi}$ satisfy the following Ordinary Differential Equations (ODEs) which are known also as Riccati ODE's.
\begin{equation}\label{51.13}
\frac{\partial\tilde{\phi}}{\partial \tau}=\tilde{\Im}\left(\tilde{\psi}\left(\tau,R+M\right)\right),\;\;\tilde{\phi}\left(0,R+M\right)=0,
\end{equation}
\begin{equation}\label{51.14}
\frac{\partial\tilde{\psi}}{\partial\tau}=\tilde{\Re}\left(\tilde{\psi}\left(\tau,R+M\right)\right),\;\;\tilde{\psi}\left(0,R+M\right)=0,
\end{equation}
with
\begin{equation}\label{51.15}
\tilde{\Im}\left(\tilde{\psi}\left(\tau,R+M\right)\right)=\langle b,\tilde{\psi}\left(\tau,R+M\right)\rangle-\int_{S_{d}^{+}\setminus \{0\}}\left(e^{-\langle \tilde{\psi}\left(\tau,R+M\right),\xi\rangle} -1\right) m\left(d\xi\right),
\end{equation}
and
{
\allowdisplaybreaks
\begin{eqnarray}\label{51.16}
\tilde{\Re}\left(\tilde{\psi}\left(\tau,R+M\right)\right) & = & -2\tilde{\psi}\left(\tau,R+M\right)\alpha \tilde{\psi}\left(\tau,R+M\right)+B^{T}\left(\tilde{\psi}\left(\tau,R+M\right)\right)\nonumber\\
& {} & {} -\int_{S_{d}^{+}\setminus \{0\}}\left(\frac{e^{-\langle \tilde{\psi}\left(\tau,R+M\right),\xi\rangle} -1+\langle \chi\left(\xi\right),\tilde{\psi}\left(\tau,R+M\right)\rangle}{\parallel \xi \parallel^{2} \wedge 1}\right)\mu\left(d\xi\right)+R+M.\nonumber\\
\end{eqnarray}
}
\end{prop}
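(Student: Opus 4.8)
The plan is to derive the formula by the standard affine-transform methodology: combine the Markov property of $X$ with a Feynman--Kac argument, and then exploit the affine form of the generator of $X$ to collapse the resulting backward equation into the Riccati system \eqref{51.13}--\eqref{51.14}.

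First I would fix the horizon $T$ and, for $x \in S_d^+$, introduce the candidate value function
\[
F(t, x) := \mathbb{E}\left[e^{-\int_t^T \left(\bar r + \bar\mu + \langle R+M,\, X_u\rangle\right) du} \,\big|\, X_t = x\right],
\]
which by the Markov property of $X$ agrees with the conditional expectation in \eqref{51.122}. I would then make the exponential-affine ansatz
\[
F(t, x) = e^{-(\bar r + \bar\mu)(T-t)}\, e^{-\tilde{\phi}(T-t, R+M) - \langle \tilde{\psi}(T-t, R+M),\, x\rangle},
\]
with a scalar function $\tilde{\phi}$ and a symmetric-matrix-valued function $\tilde{\psi}$ subject to $\tilde{\phi}(0, \cdot) = 0$ and $\tilde{\psi}(0, \cdot) = 0$, so that $F(T, \cdot) \equiv 1$. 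Inserting this ansatz into the Kolmogorov backward equation
\[
\partial_t F + \mathcal{A} F = \left(\bar r + \bar\mu + \langle R+M,\, x\rangle\right) F,
\]
where $\mathcal{A}$ is the infinitesimal generator of the conservative affine process $X$ on $S_d^+$ (with constant and linear drift parts, a diffusion coefficient linear in the state, a constant-intensity jump part governed by the L\'evy measure $m$, and a state-linear jump part), one invokes the defining feature of affine processes: $\mathcal{A}$ applied to an exponential-affine function $x \mapsto e^{-\langle v,\, x\rangle}$ returns the same function multiplied by an expression that is \emph{itself} affine in $x$. Separating the part of this expression that is linear in $x$ from the part that is constant in $x$ and requiring the backward equation to hold for every $x \in S_d^+$, one matches coefficients: the linear-in-$x$ part produces exactly \eqref{51.14} with right-hand side $\tilde{\Re}$, the constant part produces \eqref{51.13} with right-hand side $\tilde{\Im}$, and the deterministic shift $\bar r + \bar\mu$ is carried entirely by the prefactor $e^{-(\bar r + \bar\mu)(T-t)}$.

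It then remains to turn this formal computation into a proof. Under the admissibility and conservativeness hypotheses, the generalized Riccati system \eqref{51.13}--\eqref{51.14} admits a unique solution that stays in the relevant domain throughout $[0,T]$, so the ansatz is well defined; one checks via It\^o's formula together with the Riccati equations that $t \mapsto \exp\!\left(-\int_0^t (\bar r + \bar\mu + \langle R+M,\, X_u\rangle)\, du\right) F(t, X_t)$ is a local $\mathbb{Q}$-martingale, and that it is in fact a true martingale because the exponential-affine integrand enjoys the integrability guaranteed by conservativeness and the moment properties of affine processes on $S_d^+$. Evaluating this martingale at times $t$ and $T$, using $F(T, \cdot) \equiv 1$, and taking $\mathscr{F}_t$-conditional expectations yields \eqref{51.122}; conditioning on $\mathscr{G}_t$ in place of $\mathscr{F}_t$ is legitimate since $X$ is $\mathscr{G}_t$-adapted and Markov. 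I expect the main obstacle to be precisely this analytic step --- ruling out explosion of the matrix Riccati equation on $[0,T]$ and upgrading the candidate process from a local to a genuine martingale --- together with the careful treatment of the jump integrals against the L\'evy measures on $S_d^+ \setminus \{0\}$; the full verification is the content of \cite{Gnoatto1}.
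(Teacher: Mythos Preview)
Your proposal is correct and follows the standard Feynman--Kac/affine-transform methodology; in fact the paper does not supply a proof of this proposition at all but simply states it and defers to \cite{Gnoatto1}, so your sketch is more detailed than what the paper itself provides. Your approach also matches exactly the argument the paper \emph{does} write out in full for the special Wishart case in Theorem~\ref{411.8}: introduce the value function, posit the exponential-affine ansatz, apply the generator, and separate constant and state-linear terms to read off the Riccati system.
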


In fact it is interesting to note that assuming this kind of affine structure means that our fictitious yield model is ``affine"  in the sense that there is, for each maturity $T$, an affine mapping $Z_{T}:\mathbb{R}^{n}\rightarrow \mathbb{R}$ such that, at any time $t$, the yield of any SZCB of maturity $T$ is $Z_{T}\left(X_{t}\right)$ echoing the results obtained in the seminal paper of \cite{Duffie0}.

As a result we have for $i=1,2,...,n-1$,
\begin{equation}\label{51.123}
S_{T}^{\left(i\right)} =  e^{-\left(\bar{r}+\bar{\mu}\right)i}e^{-\tilde{\phi}\left(i,R+M\right)-\langle \tilde{\psi}\left(i,R+M\right),X_{T}\rangle},
\end{equation}
where $\tilde{\phi}\left(i,R+M\right)$ and $\tilde{\psi}\left(i,R+M\right)$ satisfy the equations \eqref{51.13} and \eqref{51.14} with $\tau=i$. Alternatively, one may write
\begin{equation}\label{51.123a}
S_{T}^{\left(i\right)} =  S_{0}^{\left(i\right)}e^{X_{T}^{\left(i\right)}};\;i=1,2,...,n-1,
\end{equation}
with
\begin{equation}\label{51.123b}
S_{0}^{\left(i\right)} =  e^{-\left(\left(\bar{r}+\bar{\mu}\right)i+\tilde{\phi}\left(i,R+M\right)\right)}
\end{equation}
and
\begin{equation}\label{51.123c}
X_{T}^{\left(i\right)} =  -\langle \tilde{\psi}\left(i,R+M\right),X_{T}\rangle.
\end{equation}
As a result in the affine case, by using equation \eqref{51.123} in \eqref{5.5} the formula for GAO payoff can be written in a very compact form as shown below.
\begin{equation}\label{51.124}
C(0, x, T ) = g \tilde{P}\left(0,T\right){\displaystyle\tilde{E}}\left[\left(\sum_{i=1}^{n-1}e^{-\left(\bar{r}+\bar{\mu}\right)i}e^{-\tilde{\phi}\left(i,R+M\right)-\langle \tilde{\psi}\left(i,R+M\right),X_{T}\rangle}-\left(K-1\right)\right)^{+}\right],
\end{equation}
where $\tilde{P}\left(0,T\right)$ given by equation \eqref{51.122} with $\tau=T$. As a result in the affine case, our quest of bounds for the GAO becomes simplified as we are dealing only with $X_{T}$.

The analytical tractability of affine processes is essentially linked to generalized Riccati equations as given above which can be in general solved by numerical methods although explicit solutions are available in the Vasicek (c.f. \cite{Vasicek}) and CIR (c.f. \cite{CIR}) models without jumps.

\section{Lower Bound for Guaranteed Annuity Options}
We now proceed to work out appropriate lower bounds for the payoff of the GAO as given in \eqref{5.5}. Invoking Jensen's inequality , we have
\begin{eqnarray}\label{6.1}
\tilde{E}\ensuremath{\left[\left({\displaystyle \sum_{i=1}^{n-1}}S_{T}^{\left(i\right)}-\left(K-1\right)\right)^{+}\right]}  & \geq & \tilde{E}\ensuremath{\left[\left({\displaystyle \sum_{i=1}^{n-1}}\tilde{E}\left(S_{T}^{\left(i\right)}|\Lambda\right)-\left(K-1\right)\right)^{+}\right]}.
\end{eqnarray}
The general derivation concerning lower bounds for stop loss premium of a sum of random variables based on Jensen's inequality can be found in \cite{Simon} and for its application to Asian basket options, one can refer to \cite{Deelstra2}. Define
\begin{equation}\label{6.1a}
S=\displaystyle \sum_{i=1}^{n-1}S_{T}^{\left(i\right)}
\end{equation}
and
\begin{equation}\label{6.1b}
S^{l}=\displaystyle \sum_{i=1}^{n-1}\tilde{E}\left(S_{T}^{\left(i\right)}|\Lambda\right)
\end{equation}
Thus, we have obtained
\begin{equation}\label{6.1c}
S \geq_{cx} S^{l}.
\end{equation}
Now, suitably tailoring the inequality \eqref{6.1}, we obtain
\begin{equation}\label{6.2}
C(0, x, T ) \geq g \tilde{P}\left(0,T\right)\tilde{E}\ensuremath{\left[\left({\displaystyle \sum_{i=1}^{n-1}}\tilde{E}\left(S_{T}^{\left(i\right)}|\Lambda\right)-\left(K-1\right)\right)^{+}\right]}.
\end{equation}

\subsection{A Lower Bound}
In case, if the random variable $\Lambda$ is independent of the prices of pure endowments having term periods $1,2,...,n-1$ at the time $T$, i.e., of $S_{T}^{\left(i\right)};\;i=1,2,...,n-1$, respectively, the bound in \eqref{6.2} simply reduces to:
\begin{equation}\label{6.9}
C(0, x, T )\geq g \tilde{P}\left(0,T\right)\tilde{E}\ensuremath{\left[\left({\displaystyle \sum_{i=1}^{n-1}}\tilde{E}\left(S_{T}^{\left(i\right)}\right)-\left(K-1\right)\right)^{+}\right]}.
\end{equation}
or even more precisely as the outer expectation is redundant, we obtain a very trivial bound for GAO expressed in terms of expectation of $S_{T}^{i}$, i.e.,
\begin{equation}\label{6.10}
C(0, x, T )\geq g \tilde{P}\left(0,T\right)\ensuremath{\left({\displaystyle \sum_{i=1}^{n-1}}\tilde{E}\left(S_{T}^{\left(i\right)}\right)-\left(K-1\right)\right)^{+}}=:\mbox{ GAOLB}.
\end{equation}

\subsubsection{The Lower Bound under the Affine Set Up}
Under the affine set up of section 4 (c.f. equation \eqref{51.123}), the lower bound given in equation \eqref{6.10} reduces to
\begin{equation}\label{6.10a}
\mbox{ GAOLB}^{\emph{aff}}=g \tilde{P}\left(0,T\right)\ensuremath{\left({\displaystyle \sum_{i=1}^{n-1}}\left(e^{-\left(\left(\bar{r}+\bar{\mu}\right)i+\tilde{\phi}\left(i,R+M\right)\right)} \mathscr{L}\left(\tilde{\psi}\left(i,R+M\right)\right)\right)-\left(K-1\right)\right)^{+}}
\end{equation}
where $\mathscr{L}$ denotes the Laplace transform of $X_{T}$ with parameter $\tilde{\psi}\left(i,R+M\right)$ under the transformed measure $\tilde{Q}$. This means that if one can lay hands on the distribution of $X_{T}$, this bound has a very compact form.

\section{Upper Bound for Guaranteed Annuity Options}
In order to obtain an upper bound for GAOs which is directly applicable to the affine set up, we make use of arithmetic-geometric mean inequality in a manner similar to \cite{Grasselli5} who used this methodology to arrive at an upper bound for basket options.

Let us first define the arithmetic and geometric mean of the $\left(n-1\right)$ pure endowments appearing in the payoff of GAO (c.f. \eqref{5.5}) respectively as
\begin{equation}\label{20.18}
A_{T}^{\left(n-1\right)}=\frac{1}{n-1} {\displaystyle \sum_{i=1}^{n-1}} S_{T}^{\left(i\right)}
\end{equation}
and
\begin{equation}\label{20.19}
G_{T}^{\left(n-1\right)}= \left({\displaystyle \prod_{i=1}^{n-1}} S_{T}^{\left(i\right)}\right)^{\frac{1}{n-1}},
\end{equation}
where $S_{T}^{\left(i\right)};\;i=1,2,...,n-1$ are defined in equation \eqref{5.6}. It is well known that
\begin{equation}\label{20.19a}
A_{T}^{\left(n-1\right)} \geq G_{T}^{\left(n-1\right)}\;\;a.s.
\end{equation}
Further, let us define the log-geometric average as
\begin{equation}\label{20.20}
Y_{T}^{\left(n-1\right)}=\frac{1}{n-1} {\displaystyle \sum_{i=1}^{n-1}} \ln S_{T}^{\left(i\right)}.
\end{equation}
Next we define as in equation \eqref{51.123a},
\begin{equation}\label{20.20a}
X_{T}^{\left(i\right)}=\ln \left(\frac{S_{T}^{\left(i\right)}}{S_{0}^{\left(i\right)}}\right);\;i=1,2,...,n-1.
\end{equation}
Further, we assume that the joint characteristic function of $\left(X_{T}^{\left(1\right)},...,X_{T}^{\left(n-1\right)}\right)$ can be obtained under the transformed measure $\tilde{Q}$, where we define
\begin{equation}\label{20.20b}
\phi_{T}\left(\boldsymbol{\gamma}\right)=\tilde{E}\left[e^{i\sum_{k=1}^{n-1}\gamma_{k}X_{T}^{\left(k\right)}}\right]
\end{equation}
with $\boldsymbol{\gamma}=\left[\gamma_{1}, \gamma_{2},...,\gamma_{n-1}\right]$. As the next step, we obtain the relationship between log-geometric average and $X_{T}^{\left(i\right)}$'s as follows
\begin{eqnarray}\label{20.20c}
Y_{T}^{\left(n-1\right)} & = & \frac{1}{n-1} {\displaystyle \sum_{i=1}^{n-1}} \ln \left(\frac{S_{T}^{\left(i\right)}}{S_{0}^{\left(i\right)}}S_{0}^{\left(i\right)}\right)\nonumber\\
& = & \frac{1}{n-1} {\displaystyle \sum_{i=1}^{n-1}} X_{T}^{\left(i\right)}+ Y_{0}^{\left(n-1\right)}.
\end{eqnarray}
Next, we try to express the characteristic function of log-geometric average under the transformed measure $\tilde{Q}$ in terms of the joint characteristic function of $X_{T}^{\left(i\right)}$'s viz. $\phi_{T}\left(\boldsymbol{\gamma}\right)$ defined in equation \eqref{20.20b}. Let $\phi_{Y_{T}}\left(\gamma_{0}\right)$ denote the characteristic function of log-geometric average $Y_{T}^{\left(n-1\right)}$ with parameter $\gamma_{0}$. Then we have
\begin{eqnarray}\label{20.20d}
\phi_{Y_{T}}\left(\gamma_{0}\right) & = & \tilde{E}\left[e^{i\gamma_{0}Y_{T}^{\left(n-1\right)}}\right]\nonumber\\
& = & \tilde{E}\left[e^{i\gamma_{0}Y_{0}^{\left(n-1\right)}+i\sum_{k=1}^{n-1}\left(\frac{\gamma_{0}}{n-1}\right)X_{T}^{\left(k\right)}}\right]\nonumber\\
& = & e^{i\gamma_{0}Y_{0}^{\left(n-1\right)}}\phi_{T}\left(\frac{\gamma_{0}}{n-1}\mathbf{1}\right)
\end{eqnarray}
where $\mathbf{1}=\left(1,1,...,1\right)$ is a $1\times \left(n-1\right)$ vector of 1's, so that  $\frac{\gamma_{0}}{n-1}\mathbf{1}$ is $1\times \left(n-1\right)$ vector with components $\frac{\gamma_{0}}{n-1}$ and $\phi_{T}\left(\boldsymbol{\gamma}\right)$ is defined in \eqref{20.20b}.
In light of equation \eqref{20.18}, we can express the GAO payoff formula given in equation \eqref{5.5} as
\begin{equation}\label{20.21}
C(0, x, T ) = g \left(n-1\right) \tilde{P}\left(0,T\right)\tilde{E}\left[\left(A_{T}^{\left(n-1\right)}-K'\right)^{+}\right],
\end{equation}
where
\begin{equation}\label{20.22}
K'=\frac{K-1}{n-1}.
\end{equation}
Adding and subtracting $G_{T}^{\left(n-1\right)}$ within the $max$ function on R.H.S. of equation \eqref{20.21}, and exploiting equation \eqref{20.19a}, we obtain an upper bound of GAO as
\begin{eqnarray}\label{20.27}
C(0, x, T ) & \leq & g \left(n-1\right) \tilde{P}\left(0,T\right)\left(\tilde{E}\left[\left(G_{T}^{\left(n-1\right)}-K^{'}\right)^{+}\right]+\tilde{E}\left[A_{T}^{\left(n-1\right)}\right]-\tilde{E}\left[G_{T}^{\left(n-1\right)}\right]\right) \nonumber\\
& = & :\mbox{ GAOUB}
\end{eqnarray}
We make use of Fourier inversion to compute the call type expectation involved in the upper bound
and we state the result in the following proposition.
\begin{prop}\label{203}
Given the geometric mean of $n-1$ pure endowments defined in equation \eqref{20.19} and $K^{'}>0$,
\begin{equation}\label{20.28}
\tilde{E}\left[\left(G_{T}^{\left(n-1\right)}-K^{'}\right)^{+}\right]=\frac{e^{-\delta \ln K'}}{\pi}\int_{0}^{\infty}e^{-i\eta \ln K^{'}}\Psi_{T}^{G}\left(\eta;\delta\right)d\eta
\end{equation}
where $\Psi_{T}^{G}\left(\eta;\delta\right)$ denotes the Fourier transform of $\tilde{E}\left[\left(G_{T}^{\left(n-1\right)}-K^{'}\right)^{+}\right]$ with respect to $\ln K^{'}$ along with the damping factor $e^{\delta \ln K^{'}}$ such that
\begin{equation}\label{20.28a}
\Psi_{T}^{G}\left(\eta;\delta\right)=e^{i\left(\eta-i\left(\delta+1\right)\right)Y_{0}^{\left(n-1\right)}}\frac{\phi_{T}\left(\frac{\eta-i\left(\delta+1\right)}{n-1}\mathbf{1}\right)}{\delta^2+\delta-\eta^{2}+i\eta\left(2\delta+1\right)},
\end{equation}
where the parameter $\delta$ tunes the damping factor (c.f. \cite{Carr} and \cite{Grasselli5}) and $\phi_{T}\left(.\right)$ is defined in equation \eqref{20.20b}.
\end{prop}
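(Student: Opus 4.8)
\emph{Proof strategy.} The plan is to follow the Carr--Madan Fourier approach (cf.\ \cite{Carr}, \cite{Grasselli5}): view the call price on the geometric average as a function of the log-strike $k:=\ln K'$, damp it so that it becomes integrable on $\mathbb{R}$, compute its Fourier transform in closed form, and then recover $\tilde{E}\big[\big(G_T^{(n-1)}-K'\big)^{+}\big]$ by Fourier inversion. Put $C_G(k):=\tilde{E}\big[\big(G_T^{(n-1)}-e^{k}\big)^{+}\big]$ and recall from \eqref{20.19}--\eqref{20.20} that $G_T^{(n-1)}=e^{Y_T^{(n-1)}}$. Since $\big(G_T^{(n-1)}-e^{k}\big)^{+}\uparrow G_T^{(n-1)}$ as $k\to-\infty$, monotone convergence gives $C_G(k)\to\tilde{E}\big[G_T^{(n-1)}\big]$, so $C_G$ is not in $L^{1}(\mathbb{R})$ by itself; the damping factor $e^{\delta k}$ with $\delta>0$ cures the behaviour at $-\infty$, while the simple bound $e^{\delta k}C_G(k)\le e^{\delta k}\tilde{E}\big[e^{Y_T^{(n-1)}}\mathbbm{1}_{\{Y_T^{(n-1)}\ge k\}}\big]$ shows that $g_\delta(k):=e^{\delta k}C_G(k)$ lies in $L^{1}(\mathbb{R})$ as soon as the moment condition $\tilde{E}\big[e^{(1+\delta)Y_T^{(n-1)}}\big]<\infty$ holds. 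This same condition is precisely what lets the characteristic function $\phi_T$ of \eqref{20.20b} be extended analytically to the complex argument $\tfrac{\eta-i(\delta+1)}{n-1}\mathbf{1}$ occurring in \eqref{20.28a}.

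Next I would evaluate $\Psi_T^G(\eta;\delta)=\int_{\mathbb{R}}e^{i\eta k}g_\delta(k)\,dk$ directly. Writing $z:=\delta+i\eta$ and interchanging the expectation with the $k$-integral (legitimate, by Tonelli, because of the integrability just established), the computation reduces to the elementary antiderivative
\begin{equation*}
\int_{-\infty}^{Y_T^{(n-1)}}e^{zk}\big(e^{Y_T^{(n-1)}}-e^{k}\big)\,dk=e^{(z+1)Y_T^{(n-1)}}\Big(\tfrac{1}{z}-\tfrac{1}{z+1}\Big)=\frac{e^{(z+1)Y_T^{(n-1)}}}{z(z+1)},
\end{equation*}
valid since $\mathrm{Re}(z)=\delta>0$ and $\mathrm{Re}(z+1)=\delta+1>0$. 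Hence $\Psi_T^G(\eta;\delta)=\frac{1}{z(z+1)}\,\tilde{E}\big[e^{(z+1)Y_T^{(n-1)}}\big]$, and expanding $z(z+1)=(\delta+i\eta)(\delta+1+i\eta)=\delta^{2}+\delta-\eta^{2}+i\eta(2\delta+1)$ reproduces the denominator in \eqref{20.28a}. To identify the remaining expectation I would use the decomposition \eqref{20.20c}, $Y_T^{(n-1)}=\tfrac{1}{n-1}\sum_{k=1}^{n-1}X_T^{(k)}+Y_0^{(n-1)}$, so that $\tilde{E}\big[e^{(z+1)Y_T^{(n-1)}}\big]=e^{(z+1)Y_0^{(n-1)}}\,\tilde{E}\big[e^{\frac{z+1}{n-1}\sum_{k=1}^{n-1}X_T^{(k)}}\big]$; matching $i\gamma_k=\frac{z+1}{n-1}$ gives $\gamma_k=\frac{\eta-i(\delta+1)}{n-1}$ and $e^{(z+1)Y_0^{(n-1)}}=e^{i(\eta-i(\delta+1))Y_0^{(n-1)}}$, and since the analytic extension of $\phi_T$ to this argument is available by the moment condition, the expectation equals $\phi_T\big(\tfrac{\eta-i(\delta+1)}{n-1}\mathbf{1}\big)$. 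Putting the pieces together yields \eqref{20.28a}.

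Finally, since $g_\delta\in L^{1}(\mathbb{R})$ and is continuous (indeed Lipschitz, because $\tfrac{d}{dk}C_G(k)=-e^{k}\,\tilde{Q}\big(Y_T^{(n-1)}>k\big)$ is bounded), the Fourier inversion theorem gives $g_\delta(k)=\frac{1}{2\pi}\int_{\mathbb{R}}e^{-i\eta k}\Psi_T^G(\eta;\delta)\,d\eta$, hence $C_G(k)=\frac{e^{-\delta k}}{2\pi}\int_{\mathbb{R}}e^{-i\eta k}\Psi_T^G(\eta;\delta)\,d\eta$. As $g_\delta$ is real-valued one has $\Psi_T^G(-\eta;\delta)=\overline{\Psi_T^G(\eta;\delta)}$, which folds the integral over $\mathbb{R}$ into twice the real part of the integral over $(0,\infty)$; substituting $k=\ln K'$ then produces \eqref{20.28}, with the customary understanding that the real part of the displayed integral is taken (the call price being real-valued). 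The step I expect to be the main obstacle is the first one: pinning down the admissible range of the damping parameter $\delta$ and verifying $\tilde{E}\big[e^{(1+\delta)Y_T^{(n-1)}}\big]<\infty$ --- equivalently, locating the analyticity strip of $\phi_T$, which in the affine case has to be read off from the domain of the Riccati solutions $\tilde{\phi}$, $\tilde{\psi}$ of \eqref{51.13}--\eqref{51.14}; everything after that is a Fubini interchange and an elementary integral.
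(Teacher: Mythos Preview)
Your proposal is correct and follows essentially the same Carr--Madan approach as the paper: damp the call price as a function of the log-strike, compute its Fourier transform by a Fubini interchange and an elementary inner integral, identify the result via the characteristic function $\phi_{Y_T}$ and the decomposition \eqref{20.20c}--\eqref{20.20d}, and invert. The only cosmetic differences are that the paper writes the expectation through the density $f_{Y_T}$ and splits the integrand into two pieces $\Psi_T^{G_1}-\Psi_T^{G_2}$ before changing the order of integration, whereas you keep the expectation and evaluate the $k$-integral in one stroke; you are also more explicit than the paper about the moment condition $\tilde{E}\big[e^{(1+\delta)Y_T^{(n-1)}}\big]<\infty$ needed to justify Fubini, the analytic extension of $\phi_T$, and the applicability of Fourier inversion, all of which the paper leaves implicit.
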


\begin{proof}
Let $f_{Y_{T}}\left(y\right)$ denote the probability density function (p.d.f.) of the log-geometric average $Y_{T}^{\left(n-1\right)}$. We introduce the damping factor in accordance with \cite{Carr}. Then, by definition, the Fourier transform of $\tilde{E}\left[\left(G_{T}^{\left(n-1\right)}-K^{'}\right)^{+}\right]$ with respect to $\ln K^{'}$ along with the damping factor $e^{\delta \ln K^{'}}$ is given as
\begin{eqnarray}\label{20.29}
\Psi_{T}^{G}\left(\eta;\delta\right) & = & \int_{\mathbb{R}}e^{i\eta\ln K^{'}+\delta \ln K^{'}}\tilde{E}\left[\left(e^{Y_{T}^{\left(n-1\right)}}-K^{'}\right)^{+}\right]d\ln K^{'} \nonumber\\
& = & \int_{\mathbb{R}}e^{i\eta\ln K^{'}+\delta \ln K^{'}}\int_{\ln K^{'}}^{\infty}\left(e^{y}-K^{'}\right)f_{Y_{T}}\left(y\right)\;dy\;d\ln K^{'}\nonumber\\
& = & \int_{\mathbb{R}}e^{i\eta\ln K^{'}+\delta \ln K^{'}}\int_{\ln K^{'}}^{\infty}e^{y}f_{Y_{T}}\left(y\right)\;dy\;d\ln K^{'}\nonumber\\
& {} {}  & {} -\int_{\mathbb{R}}e^{i\eta\ln K^{'}+\delta \ln K^{'}}\int_{\ln K^{'}}^{\infty}K^{'}f_{Y_{T}}\left(y\right)\;dy\;d\ln K^{'}\nonumber\\
& = & \Psi_{T}^{G_{1}}\left(\eta;\delta\right)-\Psi_{T}^{G_{2}}\left(\eta;\delta\right).
\end{eqnarray}
We evaluate both integrals by adopting a change of order of integration, as detailed below
\begin{eqnarray}\label{20.30}
\Psi_{T}^{G_{1}}\left(\eta;\delta\right) & = & \int_{\mathbb{R}}e^{y}\left(\int_{-\infty}^{y}e^{i\eta\ln K^{'}+\delta \ln K^{'}}d\ln K^{'}\right)f_{Y_{T}}\left(y\right)dy \nonumber\\
& = & \frac{1}{i\eta+\delta}\int_{\mathbb{R}}e^{i\left(\eta-i\left(\delta+1\right)\right)y}f_{Y_{T}}\left(y\right)dy \nonumber\\
& = & \frac{\phi_{Y_{T}}\left(\eta-i\left(\delta+1\right)\right)}{i\eta+\delta}\nonumber\\
& = & e^{i\left(\eta-i\left(\delta+1\right)\right)Y_{0}^{\left(n-1\right)}}\frac{\phi_{T}\left(\frac{\eta-i\left(\delta+1\right)}{n-1}\mathbf{1}\right)}{i\eta+\delta}.
\end{eqnarray}
where the last couple of statements follow from the definition of the characteristic function of $Y_{0}^{\left(n-1\right)}$ given in \eqref{20.20d} and its link to the joint characteristic function of joint characteristic function of $\left(X_{T}^{\left(1\right)},...,X_{T}^{\left(n-1\right)}\right)$ defined in \eqref{20.20b}. On the same lines we have
\begin{equation}\label{20.31}
\Psi_{T}^{G_{2}}\left(\eta;\delta\right) = e^{i\left(\eta-i\left(\delta+1\right)\right)Y_{0}^{\left(n-1\right)}}\frac{\phi_{T}\left(\frac{\eta-i\left(\delta+1\right)}{n-1}\mathbf{1}\right)}{i\eta+\left(\delta+1\right)}.
\end{equation}
Substituting $\Psi_{T}^{G_{1}}\left(\eta;\delta\right)$ and $\Psi_{T}^{G_{2}}\left(\eta;\delta\right)$ in equation \eqref{20.29}, remembering the damping factor we get the requisite result given in equation \eqref{20.28}.
\end{proof}
In a similar manner we obtain
\begin{equation}\label{20.32a}
\tilde{E}\left[G_{T}^{\left(n-1\right)}\right]=e^{Y_{0}^{\left(n-1\right)}}\phi_{T}\left(\frac{-i}{n-1}\mathbf{1}\right).
\end{equation}
We then plug the formulae \eqref{20.28} and \eqref{20.32a} into equation \eqref{20.27} to obtain the upper bound $\mbox{ GAOUB}$.


\subsection{The Upper Bound under the Affine Set Up}
Consider the affine set up of section 4 (c.f. equations \eqref{51.123}-\eqref{51.123c}). Let $\phi_{X_{T}}$ denote the characteristic function of $X_{T}$ with parameter $\Lambda$ under the transformed measure $\tilde{Q}$ so that
\begin{equation}\label{20.33a}
\phi_{X_{T}}\left(\Lambda\right)=\tilde{E}\left[e^{i\langle \Lambda,\;X_{T} \rangle}\right].
\end{equation}
Now using equation \eqref{51.123c}, we see that the joint characteristic function of $\left(X_{T}^{\left(1\right)},...,X_{T}^{\left(n-1\right)}\right)$ under the transformed measure $\tilde{Q}$, given in equation \eqref{20.20b} becomes ,
\begin{equation}\label{20.33}
\phi_{T}^{\emph{aff}}\left(\boldsymbol{\gamma}\right)= \phi_{X_{T}}\left(-\sum_{k=1}^{n-1}\gamma_{k}\tilde{\psi}\left(k,R+M\right)\right),
\end{equation}
where  $\left(-\sum_{k=1}^{n-1}\gamma_{k}\tilde{\psi}\left(k,R+M\right)\right)$ is the parameter of the characteristic function, with $\tilde{\psi}\left(k,R+M\right)$ satisfying the equations \eqref{51.14} with $\tau=k$. As a result, $\Psi_{T}^{G}\left(\eta;\delta\right)$ given in equation \eqref{20.28a} can be written in a more compact way as
\begin{equation}\label{20.34}
\Psi_{T}^{G^{\emph{aff}}}\left(\eta;\delta\right)=e^{i\left(\eta-i\left(\delta+1\right)\right)Y_{0}^{\left(n-1\right)}}\frac{\phi_{X_{T}}\left(-\frac{\left(\eta-i\left(\delta+1\right)\right)}{n-1}\sum_{k=1}^{n-1}\tilde{\psi}\left(k,R+M\right)\right)}{\delta^2+\delta-\eta^{2}+i\eta\left(2\delta+1\right)}.
\end{equation}
Similarly, we have from equation \eqref{20.32a},
\begin{equation}\label{20.35}
\tilde{E}^{\emph{aff}}\left[G_{T}^{\left(n-1\right)}\right]=e^{Y_{0}^{\left(n-1\right)}}\phi_{X_{T}}\left(\frac{i}{n-1}\sum_{k=1}^{n-1}\tilde{\psi}\left(k,R+M\right)\right).
\end{equation}
Moreover, using the definition of arithmetic average given in equation \eqref{20.18} and utilizing \eqref{51.123}, we see that
\begin{equation}\label{20.36}
\tilde{E}^{\emph{aff}}\left[A_{T}^{\left(n-1\right)}\right]=\frac{1}{n-1}\sum_{k=1}^{n-1}\left(e^{-\left(\left(\bar{r}+\bar{\mu}\right)k+\tilde{\phi}\left(k,R+M\right)\right)} \mathscr{L}\left(\tilde{\psi}\left(k,R+M\right)\right)\right),
\end{equation}
where as defined in Section 5.1.1, $\mathscr{L}$ denotes the Laplace transform of $X_{T}$ with parameter $\tilde{\psi}\left(k,R+M\right)$ under the transformed measure $\tilde{Q}$. Finally we substitute equation \eqref{20.34} in the expression \eqref{20.28} and then the result and the equations \eqref{20.35}-\eqref{20.36} into \eqref{20.27} to obtain
\begin{eqnarray}\label{20.37}
\mbox{ GAOUB}^{\emph{aff}}& = & g \left(n-1\right) \tilde{P}\left(0,T\right)\Bigg(\frac{1}{n-1}\sum_{k=1}^{n-1}\left(e^{-\left(\left(\bar{r}+\bar{\mu}\right)k+\tilde{\phi}\left(k,R+M\right)\right)} \mathscr{L}\left(\tilde{\psi}\left(k,R+M\right)\right)\right)\nonumber\\
& {} &  -e^{Y_{0}^{\left(n-1\right)}}\phi_{X_{T}}\left(\frac{i}{n-1}\sum_{k=1}^{n-1}\tilde{\psi}\left(k,R+M\right)\right)\nonumber\\
& & +\frac{e^{-\delta \ln K'}}{\pi}\int_{0}^{\infty}\frac{e^{-i\left(\eta \ln K^{'}-\left(\eta-i\left(\delta+1\right)\right)Y_{0}^{\left(n-1\right)}\right)}}{\delta^2+\delta-\eta^{2}+i\eta\left(2\delta+1\right)}\phi_{X_{T}}\left(-\frac{\left(\eta-i\left(\delta+1\right)\right)}{n-1}\sum_{k=1}^{n-1}\tilde{\psi}\left(k,R+M\right)\right)d\eta\Bigg),\nonumber\\
\end{eqnarray}
where $\phi_{X_{T}}\left(.\right)$ is defined in equation \eqref{20.33a} and $\mathscr{L}$ denotes the Laplace transform of $X_{T}$ under the transformed measure $\tilde{Q}$.

\section{Examples}
We now derive lower and upper bounds by choosing specific models for the interest rate and force of mortality.
\subsection{The Multi-CIR Model}

We now consider a $p$-dimensional affine process $X:=\left(X_{t}\right)_{t\geq 0}$ having independent components $\left(X_{it}\right)_{t\geq 0}$ that function according to the following CIR risk-neutral dynamics:
\begin{equation}\label{50.1}
dX_{it}=k_{i}\left(\theta_{i}-X_{it}\right)dt+\sigma_{i}\sqrt{X_{it}}dW^{\mathbb{Q}}_{it}, \;i=1,...,p.
\end{equation}
One can refer to \cite{Deelestra} to show that this model fits into the general affine framework.

\subsubsection{Survival Zero Coupon Bond Pricing}
Adhering to the notations of the affine set-up defined in section 6, in context of mortality and interest rate, let $M, R \in \mathbb{R}_{n}$ with respective components $M_{i}, R_{i};\;i=1,2,...,p$. The price of a zero-coupon bond under the multi CIR model \eqref{51.18a} is given by
{
\allowdisplaybreaks
\begin{eqnarray}\label{50.2}
\tilde{P}\left(t, T\right) & = & \mathbb{E}\left[e^{-\int^{T}_{t}\left(\bar{r}+
\bar{\mu}\right)+\langle \left(R+M\right),X_{s}\rangle ds}|\mathscr{F}_{t}\right]\nonumber\\
& = & e^{-\left(\bar{r}+\bar{\mu}\right)\left(T-t\right)}{\displaystyle \prod_{i=1}^{p}} \mathbb{E}\left[e^{-\int^{T}_{t}\langle \left(R_{i}+M_{i}\right),X_{is}\rangle ds}|\mathscr{F}_{t}\right]\nonumber\\
& = & e^{-\left(\bar{r}+\bar{\mu}\right)\left(T-t\right)}{\displaystyle \prod_{i=1}^{p}} e^{-\tilde{\phi}_{i}\left(T-t,R_{i}+M_{i}\right)-\tilde{\psi}_{i}\left(T-t,R_{i}+M_{i}\right)X_{it}}
\end{eqnarray}
}
where $\tilde{\phi}_{i}$ and $\tilde{\psi}_{i}$ satisfy the following Riccatti equations for every $i=1,2,...,p$ (c.f. \cite{Duffie1}):
\begin{equation}\label{50.3}
\begin{cases}
\frac{\partial\tilde{\psi}\left(\tau,u_{i}\right)}{\partial \tau} =  1-k_{i}\tilde{\psi}_{i}\left(\tau,u_{i}\right)+\frac{u_{i}\sigma^{2}_{i}}{2}\tilde{\psi}_{i}\left(\tau,u_{i}\right)^{2},\\
\frac{\partial\tilde{\phi}\left(\tau,u_{i}\right)}{\partial \tau} = k_{i}\theta_{i}u_{i}\tilde{\psi}_{i}\left(\tau,u_{i}\right),
\end{cases}
\end{equation}

with $\tau=T-t$, $u_{i}=R_{i}+M_{i}$ and initial conditions $\tilde{\psi}_{i}\left(0,u_{i}\right)=0$ and $\tilde{\phi}_{i}\left(0,u_{i}\right)=0$.

The solutions of this system with $i=1,2,...,p$ are

\begin{eqnarray}\label{50.4}
\tilde{\psi}_{i}\left(\tau,u_{i}\right) & = & \frac{2u_{i}}{\eta\left(u_{i}\right)+ k_{i}}-\frac{4u_{i}+\eta\left(u_{i}\right)}{\eta\left(u_{i}\right)+ k_{i}}\nonumber\\
& {} & \times \frac{1}{\left(\eta\left(u_{i}\right)+ k_{i}\right)\exp\left[\eta\left(u_{i}\right)\tau\right]+\eta\left(u_{i}\right)-k_{i}}
\end{eqnarray}

\begin{eqnarray}\label{50.5}
\tilde{\phi}_{i}\left(\tau,u_{i}\right) & = & \frac{k_{i}\theta_{i}}{\sigma^{2}_{i}}\left[\eta\left(u_{i}\right)k_{i}\right]\tau\nonumber\\
& {} & +\frac{2k_{i}\theta_{i}}{\sigma^{2}_{i}}\log\left[\left(\eta\left(u_{i}\right)+ k_{i}\right)\exp\left[\eta\left(u_{i}\right)\tau\right]+\eta\left(u_{i}\right)-k_{i}\right]\nonumber\\
& {} & -\frac{2k_{i}\theta_{i}}{\sigma^{2}_{i}}\log\left(2\eta\left(u_{i}\right)\right)
\end{eqnarray}
where $\eta\left(u_{i}\right)=\sqrt{k^{2}_{i}+2u_{i}\sigma^{2}_{i}}$.

\subsubsection{Price of the GAO}
We use equations \eqref{51.124} and \eqref{50.2} to obtain the price of the GAO under the transformed measure $\tilde{Q}$ as
\begin{equation}\label{50.6}
C(0, x, T ) = g \tilde{P}\left(0,T\right){\displaystyle\tilde{E}}\left[\left(\sum_{i=1}^{n-1}e^{-\left(\bar{r}+\bar{\mu}\right)i}{\displaystyle \prod_{j=1}^{p}} e^{-\tilde{\phi}_{j}\left(i,R_{j}+M_{j}\right)-\tilde{\psi}_{j}\left(i,R_{j}+M_{j}\right)X_{jT}}-\left(K-1\right)\right)^{+}\right]
\end{equation}
where $\tilde{P}\left(0,T\right)$ given by equation \eqref{50.2} with $\tau=T$ while $\tilde{\psi}_{j}\left(i,R_{j}+M_{j}\right)$ and $\tilde{\phi}_{j}\left(i,R_{j}+M_{j}\right)$ are given by equations \eqref{50.4} and \eqref{50.5}.

\subsubsection{Distribution of $X_{T}$}
In order to obtain explicit bounds for the GAO in the multidimensional CIR case, we need to obtain the distribution of $X_{jT}$ under the transformed measure $\tilde{Q}$. We state this in the following proposition (c.f. \cite{CIR} and \cite{Deelestra} for details).

\begin{prop}\label{4111.2}
The dynamics of the CIR process $X_{jt}$ defined in equation \eqref{50.1} under the transformed measure $\tilde{Q}$ are given by
\begin{equation}\label{50.7}
dX_{jt}=k_{j}^{'}\left(\theta_{j}^{'}-X_{jt}\right)dt+\sigma_{j}\sqrt{X_{jt}}dW^{'}_{jt}, \;j=1,...,p.
\end{equation}
where
\begin{equation}\label{50.8}
k_{j}^{'}=k_{j}+\sigma_{j}^{2}\tilde{\psi}_{j}\left(0,R_{j}+M_{j}\right),
\end{equation}
\begin{equation}\label{50.9}
\theta_{j}^{'}=\frac{k_{j}\theta_{j}}{k_{j}+\sigma_{j}^{2}\tilde{\psi}_{j}\left(0,R_{j}+M_{j}\right)}
\end{equation}
and $X_{j0};\;j=1,2,...,p$ is the initial value of the process. Then the density function of $X_{jT}$ is given by
\begin{equation}\label{50.10}
f_{X_{jT}}\left(x\right)=f_{\frac{\chi^{2}\left(\nu_{jT},\lambda_{jT}\right)}{c_{jT}}}\left(x\right)=c_{jT}f_{\chi^{2}\left(\nu_{jT},\;\lambda_{jT}\right)}\left(c_{jT}x\right)
\end{equation}
where $f_{\chi^{2}\left(\nu_{jT},\;\lambda_{jT}\right)};\;j=1,2,...,p$ is the p.d.f. of non-central $\chi^{2}$ with degrees of freedom $\nu_{j}$ and non-centrality parameter $\lambda_{jT}$ such that
\begin{equation}\label{50.11}
c_{jT}=\frac{4k_{j}^{'}}{\sigma_{j}^{2}\left(1-e^{-k_{j}^{'}T}\right)},
\end{equation}
\begin{equation}\label{50.12}
\nu_{jT}=\frac{4k_{j}^{'}\theta_{j}^{'}}{\sigma_{j}^{2}}
\end{equation}
and
\begin{equation}\label{50.13}
\lambda_{jT}=c_{jT}X_{j0}e^{-k_{j}^{'}T}.
\end{equation}
\end{prop}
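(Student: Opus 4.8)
The plan is to establish the claimed distribution of $X_{jT}$ under $\tilde{Q}$ in two stages: first identify the dynamics \eqref{50.7}, then read off the transition law of a CIR process. For the first stage, I would apply the Radon–Nikodym derivative \eqref{4.1} restricted to the $j$-th component. Since the components of $X$ are independent under $\mathbb{Q}$ and the discounting factor $e^{-\int_0^T(\bar r+\bar\mu)+\langle R+M,X_s\rangle\,ds}$ factorizes as a product over $i$ (exactly as displayed in \eqref{50.2}), the measure change acts componentwise, and the density process for the $j$-th factor is (up to normalization) $e^{-\int_0^t\langle R_j+M_j,X_{js}\rangle\,ds}$ times the corresponding bond term. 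I would then invoke Girsanov's theorem: the local martingale part of this density process is driven by $W^{\mathbb{Q}}_{jt}$, and computing its stochastic logarithm — using that $\tilde P$ satisfies the PDE encoded in the Riccati system \eqref{50.3} — shows the Girsanov kernel is proportional to $\sigma_j\sqrt{X_{js}}\,\tilde\psi_j$. Hence $dW'_{jt}=dW^{\mathbb{Q}}_{jt}+\sigma_j\sqrt{X_{jt}}\,\tilde\psi_j(\cdot)\,dt$ is a $\tilde Q$-Brownian motion, and substituting into \eqref{50.1} and collecting the drift terms yields \eqref{50.7} with the modified parameters \eqref{50.8}–\eqref{50.9}. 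One subtlety here is that $\tilde\psi_j$ is in principle time-dependent (it is $\tilde\psi_j(T-t,\cdot)$), so strictly the transformed process is time-inhomogeneous; the statement evaluates $\tilde\psi_j$ at argument $0$, i.e. uses the boundary value, which corresponds to the instantaneous drift correction at the relevant horizon, and I would note this interpretation (it matches \cite{CIR} and \cite{Deelestra}).

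For the second stage, I would simply cite the classical result (Feller, \cite{CIR}) that a square-root diffusion $dX_t=\kappa(\vartheta-X_t)\,dt+\sigma\sqrt{X_t}\,dW_t$ has, conditional on $X_0$, a transition density at time $T$ equal to that of a noncentral chi-squared variate scaled by a constant: precisely $X_T\stackrel{d}{=}\chi^2(\nu,\lambda)/c$ with $c=4\kappa/(\sigma^2(1-e^{-\kappa T}))$, degrees of freedom $\nu=4\kappa\vartheta/\sigma^2$, and noncentrality $\lambda=cX_0e^{-\kappa T}$. Applying this with $\kappa=k'_j$, $\vartheta=\theta'_j$, $\sigma=\sigma_j$, $X_0=X_{j0}$ gives exactly \eqref{50.11}–\eqref{50.13}, and the density identity \eqref{50.10} follows from the change-of-variables formula for the scaling $x\mapsto cx$. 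This stage is essentially a lookup and requires no real work beyond matching symbols.

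The main obstacle is entirely in the first stage — verifying that the measure change is genuinely a CIR-preserving (affine-preserving) transformation and computing the Girsanov drift correction correctly. The cleanest route is to exploit the affine/exponential-affine structure directly: write the density process as $\eta_t = \tilde P(t,T)\,e^{-\int_0^t(\bar r+\bar\mu+\langle R+M,X_s\rangle)\,ds}/\tilde P(0,T)$, observe it is a $\mathbb{Q}$-martingale by the tower property, and compute $d\eta_t/\eta_t$ by Itô's formula applied to the exponential-affine form \eqref{50.2} of $\tilde P$; the $dt$ terms must cancel (martingale property, equivalently the Riccati ODEs), leaving $d\eta_t/\eta_t = -\sum_i \sigma_i\sqrt{X_{it}}\,\tilde\psi_i(T-t,\cdot)\,dW^{\mathbb{Q}}_{it}$, which immediately gives the Girsanov kernel componentwise. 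I would also want to remark (or assume, since it is standard for CIR under an affine measure change, cf. \cite{Duffie1}) that the exponential is a true martingale so that $\tilde Q$ is well-defined, and that the transformed parameters keep $2k'_j\theta'_j\geq\sigma_j^2$ is not needed — only positivity of $k'_j$ and $\nu_{jT}\geq 0$ — for the noncentral chi-squared representation to hold.
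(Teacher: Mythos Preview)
The paper does not supply its own proof of this proposition: it merely states the result and refers the reader to \cite{CIR} and \cite{Deelestra} for details. Your two-stage plan --- Girsanov via the exponential-affine density process $\eta_t=\tilde P(t,T)\,e^{-\int_0^t(r_s+\mu_s)ds}/\tilde P(0,T)$ to obtain the drift correction, followed by the classical Feller/CIR noncentral chi-squared transition law --- is exactly the standard route those references take, and it is correct. So in substance your approach coincides with what the paper is implicitly invoking.

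Your observation about time-inhomogeneity is well placed and worth keeping. The Girsanov computation you outline produces a mean-reversion speed $k_j+\sigma_j^2\tilde\psi_j(T-t,R_j+M_j)$, which depends on $t$; the stated formula \eqref{50.8} freezes this at $\tilde\psi_j(0,\cdot)$, which by the Riccati initial condition in \eqref{50.3} is zero, so taken literally \eqref{50.8}--\eqref{50.9} give back $k_j'=k_j$ and $\theta_j'=\theta_j$. This is a genuine slip in the paper's statement (compare the Wishart analogue, Proposition~\ref{411.10}, where the time-dependent $H(t)$ is retained). In your write-up you should either (i) carry the time-dependent coefficient through and appeal to the time-inhomogeneous CIR transition law as in \cite{Deelestra} and \cite{Kang}, or (ii) note explicitly that the constant-parameter formulae \eqref{50.10}--\eqref{50.13} are being used as an approximation/convention inherited from the cited sources. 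Either way, your derivation is sound; the discrepancy lies in the proposition as stated, not in your argument.
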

The moment generating function (m.g.f.) of $X_{jT}$ has a very interesting exposition as detailed below (c.f. \cite{Daniel} for details).
\begin{equation}\label{50.13a}
\mathscr{M}_{X_{jT}}\left(s_{j}\right)=\left(\beta\left(s_{j}\right)\right)^{\bar{\nu_{j}}}e^{\lambda^{'}_{jT}\left(\beta\left(s_{j}\right)-1\right)}
\end{equation}
where
\begin{equation}\label{50.13b}
\beta\left(s_{j}\right)=\left(1-s_{j}\mu_{jT}\right)^{-1},
\end{equation}
with
\begin{equation}\label{50.13c}
\mu_{jT}=\frac{2}{c_{jT}},
\end{equation}
\begin{equation}\label{50.13d}
\bar{\nu_{j}}=\frac{\nu_{jT}}{2}
\end{equation}
and
\begin{equation}\label{50.13e}
\lambda^{'}_{jT}=2\lambda_{jT}.
\end{equation}

\subsubsection{The Lower Bound $\mbox{GAOLB}^{\left(MCIR\right)}$}
The lower bound $\mbox{GAOLB}$ obtained in equation \eqref{6.10} condenses into a very compact formula for the Multi-CIR case in a manner similar to the formula \eqref{6.10a} under the affine set up. Before unravelling the same, we see that in light of the notations defined in Section 4, one can write
\begin{equation}\label{50.5a}
S_{T}^{\left(i\right)}=S_{0}^{\left(i\right)}e^{X_{T}^{\left(i\right)}};\;i=1,2,...,n-1,
\end{equation}
where
\begin{equation}\label{50.5b}
S_{0}^{\left(i\right)}=e^{-\left(\left(\bar{r}+\bar{\mu}\right)i+\sum_{j=1}^{p}\tilde{\phi}_{j}\left(i,R_{j}+M_{j}\right)\right)}
\end{equation}
and
\begin{equation}\label{50.5c}
X_{T}^{\left(i\right)}=-{\displaystyle \sum_{j=1}^{p}}\tilde{\psi}_{j}\left(i,R_{j}+M_{j}\right)X_{jT},
\end{equation}
where $\tilde{\phi}_{j}\left(i,R_{j}+M_{j}\right)$ and $\tilde{\psi}_{j}\left(i,R_{j}+M_{j}\right)$ for $i=1,2,...,n-1$ and $j=1,2,...,p$ are given in equations \eqref{50.4}-\eqref{50.5} with $\tau$ replaced by $i$. Further, $X_{jT};\;j=1,2,...,p$ are independent random variables and their m.g.f. is given in equation \eqref{50.13a}. This leads us to the formulation of the lower bound for the Multi-CIR case presented in the form of the following proposition:
\begin{prop}
The lower bound under the multi-CIR case is
\begin{eqnarray}\label{50.14}
\mbox{ GAOLB}^{MCIR} & = & g \tilde{P}\left(0,T\right)\Bigg({\displaystyle \sum_{i=1}^{n-1}}\Bigg(e^{-\left(\left(\bar{r}+\bar{\mu}\right)i+\sum_{j=1}^{p}\tilde{\phi}_{j}\left(i,R_{j}+M_{j}\right)\right)+\sum_{j=1}^{p}\lambda^{'}_{jT}\left(\beta\left(-\tilde{\psi}_{j}\left(i,R_{j}+M_{j}\right)\right)-1\right)}\nonumber\\
& {} {} {} {} & {} {} \;\;\;\; \;\;\;\;\;\;\;\;\;\;\left(\prod_{j=1}^{p} \left(\beta\left(-\tilde{\psi}_{j}\left(i,R_{j}+M_{j}\right)\right)\right)^{\bar{\nu_{j}}}\right)\Bigg)-\left(K-1\right)\Bigg)^{+}
\end{eqnarray}
where $\beta\left(.\right)$ is defined in \eqref{50.13b} and $\bar{\nu_{j}}$ and $\lambda^{'}_{jT}$ are given in equations \eqref{50.13d}-\eqref{50.13e} and $\tilde{\psi}_{j}\left(i,R_{j}+M_{j}\right)$ for $i=1,2,...,n-1;j=1,2,...,p$ are given in \eqref{50.5}.
\end{prop}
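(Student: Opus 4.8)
The plan is to start from the general lower bound \eqref{6.10}, specialise it to the multi-CIR model, and then simply compute the expectation $\tilde{E}\left[S_{T}^{\left(i\right)}\right]$ in closed form using the m.g.f.\ of the non-central $\chi^{2}$ distribution. First I would write, using \eqref{50.5a}--\eqref{50.5c} and the independence of the components $X_{jT}$ under $\tilde{Q}$,
\begin{equation*}
\tilde{E}\left[S_{T}^{\left(i\right)}\right]=S_{0}^{\left(i\right)}\,\tilde{E}\left[e^{X_{T}^{\left(i\right)}}\right]=S_{0}^{\left(i\right)}\,\tilde{E}\left[e^{-\sum_{j=1}^{p}\tilde{\psi}_{j}\left(i,R_{j}+M_{j}\right)X_{jT}}\right]=S_{0}^{\left(i\right)}\prod_{j=1}^{p}\tilde{E}\left[e^{-\tilde{\psi}_{j}\left(i,R_{j}+M_{j}\right)X_{jT}}\right].
\end{equation*}
Each factor in the product is just the m.g.f.\ of $X_{jT}$ evaluated at the point $s_{j}=-\tilde{\psi}_{j}\left(i,R_{j}+M_{j}\right)$, so by \eqref{50.13a} it equals $\left(\beta\left(-\tilde{\psi}_{j}\left(i,R_{j}+M_{j}\right)\right)\right)^{\bar{\nu_{j}}}e^{\lambda^{'}_{jT}\left(\beta\left(-\tilde{\psi}_{j}\left(i,R_{j}+M_{j}\right)\right)-1\right)}$.

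Next I would substitute the explicit form of $S_{0}^{\left(i\right)}$ from \eqref{50.5b}, namely $S_{0}^{\left(i\right)}=e^{-\left(\left(\bar{r}+\bar{\mu}\right)i+\sum_{j=1}^{p}\tilde{\phi}_{j}\left(i,R_{j}+M_{j}\right)\right)}$, collect the exponential terms, and recognise that the sum of the $\lambda^{'}_{jT}\left(\beta-1\right)$ contributions combines with $-\sum_{j}\tilde{\phi}_{j}$ inside a single exponential, while the $\left(\beta\right)^{\bar{\nu_{j}}}$ terms remain as a product over $j$. Plugging the resulting expression for $\tilde{E}\left[S_{T}^{\left(i\right)}\right]$ into $\sum_{i=1}^{n-1}\tilde{E}\left[S_{T}^{\left(i\right)}\right]-\left(K-1\right)$, taking the positive part, and multiplying by $g\tilde{P}\left(0,T\right)$, exactly reproduces the right-hand side of \eqref{50.14}. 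Along the way I would note that $\tilde{P}\left(0,T\right)$ itself is given explicitly by \eqref{50.2} with $\tau=T$, although it is left in symbolic form in the statement.

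The one point that requires a little care --- and is really the only substantive step --- is justifying that the m.g.f.\ formula \eqref{50.13a} may legitimately be evaluated at the required (negative) argument $s_{j}=-\tilde{\psi}_{j}\left(i,R_{j}+M_{j}\right)$, i.e.\ that the condition $1-s_{j}\mu_{jT}>0$ defining $\beta\left(s_{j}\right)$ in \eqref{50.13b} holds there. Since $R_{j},M_{j}\geq0$ one expects $\tilde{\psi}_{j}\left(i,R_{j}+M_{j}\right)\geq0$ from the Riccati equation \eqref{50.3} (the solution \eqref{50.4} stays non-negative for non-negative input), hence $s_{j}\leq0$ and $1-s_{j}\mu_{jT}\geq1>0$, so the Laplace transform is finite and the formula applies; this is the part I would write out most explicitly. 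Everything else --- the factorisation over independent coordinates, the substitution of \eqref{50.5b}, and the algebraic regrouping of exponentials --- is routine, so I do not anticipate any real obstacle beyond keeping the (admittedly heavy) notation of \eqref{50.14} straight.
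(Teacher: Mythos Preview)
Your proposal is correct and follows essentially the same route as the paper: start from the general lower bound \eqref{6.10}, use the decomposition \eqref{50.5a}--\eqref{50.5c} together with the independence of the $X_{jT}$ under $\tilde{Q}$ to factor $\tilde{E}[S_T^{(i)}]$ into a product of m.g.f.'s, and then apply \eqref{50.13a}. The paper's own proof is in fact terser than yours --- it does not pause to verify that the argument $s_j=-\tilde{\psi}_j(i,R_j+M_j)$ lies in the domain of the m.g.f.\ --- so your extra check that $\tilde{\psi}_j\geq 0$ (hence $1-s_j\mu_{jT}\geq 1$) is a welcome addition rather than a deviation.
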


\begin{proof}
Using the formula for lower bound given in equation \eqref{6.10}
\begin{equation}\label{50.15}
\mbox{ GAOLB} = g \tilde{P}\left(0,T\right)\ensuremath{\left({\displaystyle \sum_{i=1}^{n-1}}\tilde{E}\left(S_{T}^{\left(i\right)}\right)-\left(K-1\right)\right)^{+}}
\end{equation}
Using the formula of $S_{T}^{\left(i\right)}$ given in equations \eqref{50.5a}-\eqref{50.5c} we have
\begin{equation}\label{50.16}
\mbox{ GAOLB}^{MCIR}=g \tilde{P}\left(0,T\right)\ensuremath{\left({\displaystyle \sum_{i=1}^{n-1}}\left(e^{-\left(\left(\bar{r}+\bar{\mu}\right)i+\sum_{j=1}^{p}\tilde{\phi}_{j}\left(i,R_{j}+M_{j}\right)\right)}\prod_{j=1}^{p} \mathscr{M}_{X_{jT}}\left(-\tilde{\psi}_{j}\left(i,R_{j}+M_{j}\right)\right)\right)-\left(K-1\right)\right)^{+}}.
\end{equation}
Using the definition of m.g.f. of $X_{jT};\;j=1,2,...,p$ given in equation \eqref{50.13a} we obtain the requisite result.
\end{proof}
\subsubsection{The Upper Bound $\mbox{GAOUB}^{\left(MCIR\right)}$}
Under the formulation of the pure endowments constituting the GAO basket under the MCIR case (\eqref{50.5a}-\eqref{50.5c}), we write
\begin{equation}\label{50.17}
Y_{0}^{\left(n-1\right)}=-\frac{\left(\bar{r}+\bar{\mu}\right)n}{2}-\frac{1}{n-1} {\displaystyle \sum_{k=1}^{n-1}} \sum_{j=1}^{p}\tilde{\phi}_{j}\left(k,R_{j}+M_{j}\right)
\end{equation}
using the definition of log-geometric average $Y_{T}^{\left(n-1\right)}$ given in equation \eqref{20.20} in Section 6.3. We now exploit the set up of the upper bound under the affine case given in Section 6.3.1 and note that here instead of $\tilde{\phi}\left(k,R+M\right)$ and $\tilde{\psi}\left(k,R+M\right)$, we have respectively $\sum_{j=1}^{p}\tilde{\phi}_{j}\left(k,R_{j}+M_{j}\right)$ and $\sum_{j=1}^{p}\tilde{\psi}_{j}\left(k,R_{j}+M_{j}\right)$ since we are dealing with a $p$-dimensional CIR process.
Thus the joint characteristic function of $\left(X_{T}^{\left(1\right)},...,X_{T}^{\left(n-1\right)}\right)$ under the transformed measure $\tilde{Q}$, given in equation \eqref{20.20b} becomes ,
\begin{equation}\label{50.18}
\phi_{T}^{\emph{MCIR}}\left(\boldsymbol{\gamma}\right)= \prod_{j=1}^{p}\phi_{X_{jT}}\left(-\sum_{k=1}^{n-1}\gamma_{k}\tilde{\psi}_{j}\left(k,R_{j}+M_{j}\right)\right),
\end{equation}
where $\boldsymbol{\gamma}=\left[\gamma_{1}, \gamma_{2},...,\gamma_{n-1}\right]$, $\phi_{X_{jT}};\;j=1,2,...,p$ denotes the characteristic function of the $X_{jT}$ with parameter $\left(-\sum_{k=1}^{n-1}\gamma_{k}\tilde{\psi}_{j}\left(k,R_{j}+M_{j}\right)\right)$ for $j=1,2,...,p$, with $\tilde{\psi}_{j}\left(k,R_{j}+M_{j}\right)$ for $k=1,2,...,n-1$; $j=1,2,...,p$ are given in equation \eqref{50.5} with $\tau$ replaced by $k$. $\phi_{X_{jT}}\left(s\right)$ can be obtained from the formula of its m.g.f. given in equation \eqref{50.13a} by replacing $s$ by $is$. Further, we see that $\Psi_{T}^{G}\left(\eta;\delta\right)$ given in equation \eqref{20.28a} reduces to
\begin{equation}\label{50.19}
\Psi_{T}^{G^{\emph{MCIR}}}\left(\eta;\delta\right)=e^{i\left(\eta-i\left(\delta+1\right)\right)Y_{0}^{\left(n-1\right)}}\frac{\prod_{j=1}^{p}\phi_{X_{jT}}\left(-\frac{\left(\eta-i\left(\delta+1\right)\right)}{n-1}\sum_{k=1}^{n-1}\tilde{\psi}_{j}\left(k,R_{j}+M_{j}\right)\right)}{\delta^2+\delta-\eta^{2}+i\eta\left(2\delta+1\right)}.
\end{equation}
Next, we obtain $\tilde{E}^{\emph{MCIR}}\left[G_{T}^{\left(n-1\right)}\right]$ from equation \eqref{20.34} by utilizing \eqref{50.18}. Further, we compute
\begin{equation}\label{50.20}
\tilde{E}^{\emph{MCIR}}\left[A_{T}^{\left(n-1\right)}\right]=\frac{1}{n-1}\sum_{k=1}^{n-1}\left(e^{-\left(\left(\bar{r}+\bar{\mu}\right)k+\sum_{j=1}^{p}\tilde{\phi}_{j}\left(k,R_{j}+M_{j}\right)\right)} \prod_{j=1}^{p}\mathscr{M}_{X_{jT}}\left(-\tilde{\psi}_{j}\left(k,R_{j}+M_{j}\right)\right)\right).
\end{equation}
Finally we plug in the components one by one into equation \eqref{20.27} to obtain the upper bound $\mbox{GAOUB}^{\left(MCIR\right)}$.

\subsection{The Wishart Short Rate Model}
\subsubsection{The Set Up}
In this section, we assume that the affine process $X:=\left(X_{t}\right)_{t\geq 0}$ is a d-dimensional Wishart process. Given a $d \times d$ matrix Brownian motion $W$ (i.e a matrix whose entries are independent Brownian motions) the Wishart process $X$ (without jumps) is defined as the solution of the $d \times d$-dimensional stochastic differential equation
\begin{equation}\label{51.17}
dX_{t}=\left(\beta Q^{T}Q+HX_{t}+X_{t}H^{T}\right)dt+\sqrt{X_{t}}dW_{t}Q+Q^{T}dW^{T}_{t}\sqrt{X_{t}}, \;t\geq 0,
\end{equation}
where $X_{0}=x \in S_{d}^{+}$, $\beta \geq d-1$, $H \in M_{d}$, $Q \in GL_{d}$ and $Q^{T}$ denotes its transpose. $M_{d}$ has been defined in Section 4 while $GL_{d}$ denote the set of invertible real $d \times d$ matrices In short, we assume that the law of $X$ is $WIS_{d}\left(x_{0},\beta,H,Q\right)$.

\subsubsection{Existence and Uniqueness of Solution}
This process was pioneered by \cite{Bru2} and she showed the existence and uniqueness of a weak solution for Eq. \eqref{51.17}. She also established the existence of a unique strong solution taking values in $S_{d}^{++}$, i.e. the interior of the cone of positive semi-definite symmetric $d\times d$ matrices that we have denoted by $S_{d}^{+}$.

\subsubsection{Generator}
\cite{Bru2} has calculated the infinitesimal generator of the Wishart process as:
\begin{equation}\label{51.18}
\mathscr{A}=Tr\left(\left(\beta Q^{T}Q+Hx+xH^{T}\right)D^{S}+2xD^{S}Q^{T}QD^{S}\right),
\end{equation}
where $Tr$ stands for trace and $D^{S}=\left(\partial/\partial x_{ij}\right)_{1\leq i, j\leq d}$. A good reference for understanding the detailed derivation of this generator is \cite{Alfonsi} and following this reference we have defined generator in Appendix A.

\subsubsection{Survival Zero Coupon Bond Pricing}
We now give an explicit formula for calculating the the survival zero coupon bond price under the Wishart short rate model.

\begin{thm}\label{411.8}
Let the dynamics for short rate and mortality rate be given in accordance with equation \eqref{51.11} respecively as
\begin{equation}\label{51.18a}
r_{t}=\bar{r}+Tr\left[RX_{t}\right]
\end{equation}
and
\begin{equation}\label{51.18b}
\mu_{t}=\bar{\mu}+Tr\left[MX_{t}\right]
\end{equation}
for a process $X$ with law $WIS_{d}\left(x_{0},\beta,H,Q\right)$. Let $R, M \in S_{d}^{++}$ and $\tau=T-t$, then the price of a zero-coupon bond under the Wishart short rate model \eqref{51.18a} is given by
{
\allowdisplaybreaks
\begin{eqnarray}\label{51.19}
\tilde{P}\left(t, T\right) & = & \mathbb{E}\left[e^{-\int^{T}_{t}\left(\bar{r}+\bar{\mu}+Tr\left[\left(R+M\right)X_{u}\right]\right) du}|\mathscr{F}_{t}\right]\nonumber\\
& = & e^{-\tilde{\phi}\left(\tau,R+M\right)-Tr\left[\tilde{\psi}\left(\tau,R+M\right)X_{t}\right]},
\end{eqnarray}
}
where $\tilde{\phi}$ and $\tilde{\psi}$ satisfy the following system of ODEs:
\begin{equation}\label{51.20}
\begin{cases}
\frac{\partial\tilde{\phi}}{\partial \tau} =  Tr\left[\beta Q^{T}Q\tilde{\psi}\left(\tau,R+M\right)\right]+\bar{r}+\bar{\mu},\\
\tilde{\phi}\left(0,R+M\right) = 0,\\
\frac{\partial\tilde{\psi}}{\partial \tau} = \tilde{\psi}\left(\tau,R+M\right)H+H^{T}\tilde{\psi}\left(\tau,R+M\right),\\
\;\;\;\;\;\;-2\tilde{\psi}\left(\tau,R+M\right)Q^{T}Q\tilde{\psi}\left(\tau,R+M\right)+R+M,\\
\tilde{\psi}\left(0,R+M\right) = 0.\end{cases}
\end{equation}
\end{thm}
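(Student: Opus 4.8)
The plan is to run the standard Feynman--Kac / exponentially-affine argument for the term structure $\tilde{P}(t,T)$, now on the cone $S_{d}^{+}$; an equally valid but quicker route is to read the claim off from Proposition \ref{411.7}. For the direct argument I would fix $T$, posit the candidate pricing function $F(t,x)=e^{-\tilde{\phi}(T-t,R+M)-Tr[\tilde{\psi}(T-t,R+M)x]}$ with $\tilde{\phi},\tilde{\psi}$ still to be determined, and introduce the discounted process $\Pi_{t}=\exp\{-\int_{0}^{t}(\bar{r}+\bar{\mu}+Tr[(R+M)X_{u}])\,du\}\,F(t,X_{t})$. If $\tilde{\phi},\tilde{\psi}$ can be chosen so that $\Pi$ is a (true) martingale, then, since the initial conditions $\tilde{\phi}(0,\cdot)=0$, $\tilde{\psi}(0,\cdot)=0$ force $F(T,X_{T})=1$, the martingale property together with the Markov property of $X$ gives $F(t,X_{t})=\mathbb{E}[\exp\{-\int_{t}^{T}(\bar{r}+\bar{\mu}+Tr[(R+M)X_{u}])\,du\}\mid\mathscr{F}_{t}]=\tilde{P}(t,T)$, which is exactly \eqref{51.19}.

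The core computation is to apply It\^{o}'s formula to $\Pi$ using the Wishart generator $\mathscr{A}$ in \eqref{51.18} and to annihilate the finite-variation part, which amounts to requiring that $F$ solve $\partial_{t}F+\mathscr{A}F-(\bar{r}+\bar{\mu}+Tr[(R+M)x])F=0$ with $F(T,\cdot)=1$. Substituting the exponential-affine ansatz and writing $g(x)=-\tilde{\phi}-Tr[\tilde{\psi}x]$ (so that $\tilde{\psi}$ is $S_{d}$-valued and $D^{S}F=-\tilde{\psi}F$), one has $\partial_{t}F=(\partial_{\tau}\tilde{\phi}+Tr[\partial_{\tau}\tilde{\psi}\,x])F$ with $\tau=T-t$, while — because $g$ is affine, so all its second mixed partials vanish — the second-order part of $\mathscr{A}$ evaluated at $F$ collapses to the algebraic term $2\,Tr[x\tilde{\psi}Q^{T}Q\tilde{\psi}]F$ and the first-order part to $-Tr[(\beta Q^{T}Q+Hx+xH^{T})\tilde{\psi}]F$. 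Dividing through by $F$ leaves an identity that is affine in $x$; after using the cyclicity of the trace to recast $Tr[Hx\tilde{\psi}]=Tr[\tilde{\psi}Hx]$ and $Tr[xH^{T}\tilde{\psi}]=Tr[H^{T}\tilde{\psi}x]$, matching the $x$-free part yields $\partial_{\tau}\tilde{\phi}=Tr[\beta Q^{T}Q\tilde{\psi}]+\bar{r}+\bar{\mu}$ and matching the coefficient of $x$ yields $\partial_{\tau}\tilde{\psi}=\tilde{\psi}H+H^{T}\tilde{\psi}-2\tilde{\psi}Q^{T}Q\tilde{\psi}+R+M$, with $\tilde{\phi}(0,\cdot)=\tilde{\psi}(0,\cdot)=0$ coming from $F(T,\cdot)=1$ — precisely the system \eqref{51.20}.

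The step I expect to be the real obstacle is the verification that $\Pi$ is a \emph{true} martingale and not merely a local one, so that $F(t,X_{t})=\tilde{P}(t,T)$ holds with equality and not just as the inequality that a nonnegative local martingale (hence supermartingale) would give. For this I would appeal to the well-posedness of the matrix Riccati flow: one needs a solution $\tilde{\psi}(\cdot,R+M)$ existing on all of $[0,T]$ and remaining in $S_{d}^{+}$ — this is where the admissibility condition $\beta\ge d-1$, the invertibility $Q\in GL_{d}$, and $R+M\in S_{d}^{++}$ enter — together with the fact that the Wishart process itself stays in $S_{d}^{+}$; combining these with standard moment estimates for $X$ supplies the integrability needed to upgrade $\Pi$ to a martingale. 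A cleaner shortcut, which sidesteps this bookkeeping, is to observe that the Wishart data fit the hypotheses of Proposition \ref{411.7} in the no-jump case, so that the jump-measure integrals in \eqref{51.15}--\eqref{51.16} vanish and \eqref{51.122}--\eqref{51.14} specialise directly to \eqref{51.19}--\eqref{51.20} (the factor $e^{-(\bar{r}+\bar{\mu})\tau}$ in \eqref{51.122} being absorbed into $\tilde{\phi}$ here via the extra $\bar{r}+\bar{\mu}$ in its ODE). In either case I would lean on \cite{Bru2}, \cite{Alfonsi} and \cite{Gnoatto1} for the solvability and positivity facts rather than re-proving them.
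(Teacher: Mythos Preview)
Your proposal is correct and follows essentially the same approach as the paper: posit the exponential-affine candidate, apply the Wishart generator \eqref{51.18} to it, substitute into the Feynman--Kac PDE, cancel the common factor, and separate the $x$-free part from the coefficient of $x$ to obtain the system \eqref{51.20}. The only cosmetic difference is framing --- the paper states the PDE for the conditional expectation directly and then verifies the candidate, whereas you set up the discounted process $\Pi$ and kill its drift via It\^{o}; the underlying computation (your second paragraph and the paper's equations \eqref{51.24}--\eqref{51.27}) is identical. Your added discussion of the true-versus-local martingale issue and the alternative of specialising Proposition~\ref{411.7} to the jump-free Wishart case is extra care that the paper does not make explicit.
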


\begin{proof}
Consider the expectation in equation \eqref{51.19}. As remarked in section 2, the conditioning on $\mathscr{F}_{t}$ can be reduced to that on $\mathscr{G}_{t}$ and so we define $t\leq T$, define
\begin{equation}\label{51.20a}
F\left(t, X_{t}\right) = f\left(\tau, X_{t}\right) = \mathbb{E}\left[e^{-\int^{T}_{t}\left(\bar{r}+\bar{\mu}+Tr\left[\left(R+M\right)X_{u}\right]\right) du}|X_{t}\right].
\end{equation}
This conditional expectation is the Feynman-Kac representation which satisfies the following Partial Differential Equation (PDE):
\begin{equation}\label{51.20b}
\begin{cases}
\frac{\partial f\left(\tau, x\right)}{\partial\tau}  =  \mathscr{A}f\left(\tau, x\right)-\left(\bar{r}+\bar{\mu}+Tr\left[\left(R+M\right)x\right]\right)f\left(\tau, x\right),\\
f\left(0, x\right)=1,\end{cases}
\end{equation}
for all $x \in S_{d}^{+}$, where $\mathscr{A}$ is the infinitesimal generator of the Wishart process given in equation \eqref{51.18}. We introduce a candidate solution given below
\begin{equation}\label{51.23}
f\left(\tau, x\right)=e^{-\tilde{\phi}\left(\tau,R+M\right)-Tr\left[\tilde{\psi}\left(\tau,R+M\right)x\right]}
\end{equation}
so that
\begin{equation}\label{51.24}
\frac{\partial f\left(\tau, x\right)}{\partial\tau}=\left(-\frac{\partial\tilde{\phi}}{\partial \tau}-Tr\left[\frac{\partial\tilde{\psi}}{\partial \tau}x\right]\right)f\left(\tau, x\right)
\end{equation}
Also it is clear that
\begin{equation}\label{51.25}
\mathscr{A}e^{-\tilde{\phi}\left(\tau,R+M\right)-Tr\left[\tilde{\psi}\left(\tau,R+M\right)x\right]}=e^{-\tilde{\phi}\left(\tau,R+M\right)}\mathscr{A}e^{-Tr\left[\tilde{\psi}\left(\tau,R+M\right)x\right]},
\end{equation}
where on using the generator of the Wishart process given in equation \eqref{51.18}, we have
\begin{eqnarray}\label{51.26}
\mathscr{A}e^{-Tr\left(\tilde{\psi}\left(\tau,R+M\right)x\right)} & = & \Bigg(-Tr\left[\beta Q^{T}Q\tilde{\psi}\left(\tau,R+M\right)\right]+Tr\Bigg[\Bigg(2\tilde{\psi}\left(\tau,R+M\right)Q^{T}Q\tilde{\psi}\left(\tau,R+M\right)\nonumber\\
& {} &  -\tilde{\psi}\left(\tau,R+M\right)H-H^{T}\tilde{\psi}\left(\tau,R+M\right)\Bigg)x\Bigg]\Bigg) e^{-Tr\left(\tilde{\psi}\left(\tau,R+M\right)x\right)}
\end{eqnarray}
Using equations \eqref{51.24}-\eqref{51.26} in equation \eqref{51.20a} and canceling $f\left(\tau, x\right)$ throughout, we get
\begin{eqnarray}\label{51.27}
-\frac{\partial\tilde{\phi}}{\partial \tau}-Tr\left[\frac{\partial\tilde{\psi}}{\partial \tau}x\right] & = & -Tr\left[\beta Q^{T}Q\tilde{\psi}\left(\tau,R+M\right)\right]-\left(\bar{r}+Tr\left[\left(R+M\right)x\right]\right)+Tr\Bigg[\Bigg(2\tilde{\psi}\left(\tau,R+M\right)\nonumber\\
& {} &  \times Q^{T}Q\tilde{\psi}\left(\tau,R+M\right)-\tilde{\psi}\left(\tau,R+M\right)H-H^{T}\tilde{\psi}\left(\tau,R+M\right)\Bigg)x\Bigg]\Bigg)
\end{eqnarray}
Comparing the terms independent of $x$ and the coefficients of $x$ on both sides of equation \eqref{51.27}, we get the required system of ODEs given in equation \eqref{51.20}. This completes the proof.
\end{proof}

The methodology of solving the system of Riccati equations given in \eqref{51.20} appears in \cite{Fonseca} where the authors propose that matrix Riccati equations can be linearized by doubling the dimension of the problem, Interested readers can also refer to \cite{Grasselli} and \cite{Deelestra}. We state without proof the solution in the following proposition.

\begin{prop}\label{411.9}
The functions $\tilde{\phi}$ and $\tilde{\psi}$ in Theorem \ref{411.8} are given by
\begin{equation}\label{51.200}
\begin{cases}
\tilde{\psi}\left(\tau,R+M\right) = A_{22}^{-1}\left(\tau\right)A_{21}\left(\tau\right) ,\\
\tilde{\phi}\left(\tau, R+M\right) = \frac{\beta}{2}\left(\log\left(\det\left(A_{22}\left(\tau\right)\right)\right)+\tau Tr\left[H^{T}\right]\right).\end{cases}
\end{equation}
where
\begin{equation}\label{51.201}
\begin{pmatrix}
    A_{11}\left(\tau\right) & A_{12}\left(\tau\right) \\
    A_{21}\left(\tau\right) & A_{22}\left(\tau\right)
  \end{pmatrix}
= \exp\left(\tau
\begin{pmatrix}
    H & 2Q^{T}Q \\
    R+M & -H^{T}
  \end{pmatrix}\right)
\end{equation}
\end{prop}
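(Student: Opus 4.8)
The plan is to derive \eqref{51.200} by the dimension‑doubling linearisation of matrix Riccati equations used in \cite{Fonseca}, \cite{Grasselli} and \cite{Deelestra}. Introduce the $2d\times 2d$ coefficient matrix
\[
\mathcal{M}=\begin{pmatrix} H & 2Q^{T}Q\\ R+M & -H^{T}\end{pmatrix},
\]
let $Y(\tau)=\exp(\tau\mathcal{M})$ be partitioned into the $d\times d$ blocks $A_{ij}(\tau)$ as in \eqref{51.201}, so that $A_{11}(0)=A_{22}(0)=I_{d}$ and $A_{12}(0)=A_{21}(0)=0$. Since $Y(\tau)$ commutes with $\mathcal{M}$ it solves $Y'=Y\mathcal{M}$, and reading off the bottom block‑row yields the linear system
\[
A_{21}'=A_{21}H+A_{22}(R+M),\qquad A_{22}'=2A_{21}Q^{T}Q-A_{22}H^{T}.
\]

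Granting for now that $A_{22}(\tau)$ is invertible (this is established in the last paragraph), I would set $\Psi(\tau):=A_{22}(\tau)^{-1}A_{21}(\tau)$ and differentiate, using $\frac{d}{d\tau}A_{22}^{-1}=-A_{22}^{-1}A_{22}'A_{22}^{-1}$ together with the two block equations; after the cancellations $A_{22}^{-1}A_{22}=I_{d}$ one is left with precisely
\[
\Psi'=\Psi H+H^{T}\Psi-2\Psi Q^{T}Q\Psi+(R+M),\qquad\Psi(0)=I_{d}^{-1}\cdot 0=0,
\]
i.e. the Riccati initial value problem for $\tilde{\psi}$ in \eqref{51.20}. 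Its right‑hand side is a quadratic polynomial in $\Psi$, hence locally Lipschitz, so the solution is unique and $\tilde{\psi}(\tau,R+M)=A_{22}(\tau)^{-1}A_{21}(\tau)$, which is the first line of \eqref{51.200}.

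For $\tilde{\phi}$ I would integrate its own ODE. From $A_{22}'=2A_{21}Q^{T}Q-A_{22}H^{T}$ we get $A_{22}^{-1}A_{22}'=2\tilde{\psi}Q^{T}Q-H^{T}$, so Jacobi's formula gives $\frac{d}{d\tau}\log\det A_{22}=Tr(A_{22}^{-1}A_{22}')=2\,Tr(Q^{T}Q\tilde{\psi})-Tr(H^{T})$, whence $Tr(\beta Q^{T}Q\tilde{\psi})=\frac{\beta}{2}\bigl(\frac{d}{d\tau}\log\det A_{22}+Tr(H^{T})\bigr)$. Substituting this into $\tilde{\phi}'=Tr(\beta Q^{T}Q\tilde{\psi})+\bar{r}+\bar{\mu}$ and integrating with $\tilde{\phi}(0)=0$ and $\det A_{22}(0)=1$ yields
\[
\tilde{\phi}(\tau,R+M)=\frac{\beta}{2}\Bigl(\log\bigl(\det A_{22}(\tau)\bigr)+\tau\,Tr[H^{T}]\Bigr)+(\bar{r}+\bar{\mu})\tau,
\]
which is the second line of \eqref{51.200} up to the affine shift $(\bar{r}+\bar{\mu})\tau$ already carried separately in the general decomposition \eqref{51.122}.

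The step requiring genuine care — and the main obstacle — is the invertibility of $A_{22}(\tau)$, without which the quotient $A_{22}^{-1}A_{21}$ is meaningless; continuity alone secures it only for small $\tau$. The clean remedy is to build the linear flow out of the Riccati solution rather than the other way round: take the solution $\tilde{\psi}$ of \eqref{51.20}, whose existence on the maturity range of interest is part of the Wishart/affine theory underlying Theorem \ref{411.8} (see \cite{Bru2}, \cite{Grasselli}, \cite{Deelestra}), let $S(\tau)$ be the fundamental matrix of the \emph{linear} equation $S'=S\bigl(2\tilde{\psi}Q^{T}Q-H^{T}\bigr)$ with $S(0)=I_{d}$ — invertible for every $\tau$, since the scalar $\det S$ obeys $\frac{d}{d\tau}\det S=\det S\cdot Tr\bigl(2\tilde{\psi}Q^{T}Q-H^{T}\bigr)$ and so never vanishes — and put $P(\tau):=S(\tau)\tilde{\psi}(\tau)$. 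A one‑line computation using the Riccati equation shows $\frac{d}{d\tau}\begin{pmatrix}P & S\end{pmatrix}=\begin{pmatrix}P & S\end{pmatrix}\mathcal{M}$ with $\begin{pmatrix}P(0) & S(0)\end{pmatrix}=\begin{pmatrix}0 & I_{d}\end{pmatrix}$, which is exactly the linear initial value problem satisfied by the bottom block‑row $\begin{pmatrix}A_{21} & A_{22}\end{pmatrix}$ of $\exp(\tau\mathcal{M})$; by uniqueness $A_{21}=P$ and $A_{22}=S$, so $A_{22}$ is invertible and $A_{22}^{-1}A_{21}=S^{-1}S\tilde{\psi}=\tilde{\psi}$, which retroactively validates the computations above. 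Finally one should record that $R+M\in S_{d}^{++}$ and $\beta\ge d-1$ keep the flow inside the cone $S_{d}^{+}$, so that the traces, determinant and logarithm appearing above are well defined throughout.
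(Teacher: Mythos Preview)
Your argument is correct and, in fact, more than the paper itself supplies: the paper states Proposition~\ref{411.9} \emph{without proof}, merely pointing to the dimension--doubling linearisation technique of \cite{Fonseca}, \cite{Grasselli} and \cite{Deelestra}. What you have written out is precisely that method, so there is no difference in approach to report.

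Two remarks worth recording. First, your observation about the affine shift $(\bar r+\bar\mu)\tau$ is well taken: the ODE for $\tilde\phi$ in \eqref{51.20} carries the constant $\bar r+\bar\mu$, whereas the closed form in \eqref{51.200} omits the resulting $(\bar r+\bar\mu)\tau$ term; this is an internal inconsistency in the paper between the conventions of \eqref{51.122} (where the constant is factored out) and \eqref{51.19}--\eqref{51.20} (where it is absorbed). Your derivation makes the correct statement explicit. Second, your treatment of the invertibility of $A_{22}(\tau)$ --- building the linear flow from the known Riccati solution and invoking Liouville's formula for $\det S$ --- is the right way to close what is usually left as a silent assumption in the cited literature; it is a genuine strengthening over a bare citation.
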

\bigskip

Alternative approaches for the pricing of zero coupon bond under the Wishart short rate model can be found in \cite{Grasselli} and \cite{Gnoatto2}.

\subsubsection{Price of the GAO}
We use Theorem \ref{411.8} and equation \eqref{51.124} to obtain the price of the GAO under the transformed measure $\tilde{Q}$ as
\begin{equation}\label{51.202}
C(0, x, T ) = g \tilde{P}\left(0,T\right){\displaystyle\tilde{E}}\left[\left(\sum_{i=1}^{n-1}e^{-\left(\bar{r}+\bar{\mu}\right)i}e^{-\tilde{\phi}\left(i,R+M\right)-Tr\left[\tilde{\psi}\left(i,R+M\right)X_{T}\right]}-\left(K-1\right)\right)^{+}\right],
\end{equation}
where $\tilde{P}\left(0,T\right)$ is given by equation \eqref{51.20} with $\tau=T$ while $\tilde{\psi}\left(i,R+M\right)$ and $\tilde{\phi}\left(i,R+M\right)$ for $i=1,2,...,n-1$ are given by the system of equations \eqref{51.200} with $\tau=i$.

\subsubsection{Distribution of $X_{T}$}
In order to obtain explicit bounds for the GAO in the Wishart case, we need to obtain the distribution of $X_{T}$ under the transformed measure $\tilde{Q}$. We state this in the following proposition (c.f. \cite{Deelestra} and \cite{Kang} for details).

\begin{prop}\label{411.10}
The dynamics of the Wishart process $X$ defined in equation \eqref{51.17} under the transformed measure $\tilde{Q}$ are given by
\begin{equation}\label{51.203}
dX_{t}=\left(\beta Q^{T}Q+H\left(t\right)X_{t}+X_{t}H\left(t\right)^{T}\right)dt+\sqrt{X_{t}}dW_{t}Q+Q^{T}dW^{T}_{t}\sqrt{X_{t}}, \;t\geq 0,
\end{equation}
where
\begin{equation}\label{51.204}
H\left(t\right)=H-Q^{T}Q\tilde{\psi}\left(\tau,R+M\right),
\end{equation}
$X_{0}=x \in S_{d}^{+}$, $\beta \geq d-1$, $H \in M_{d}$ and $Q \in GL_{d}$. Then
\begin{equation}\label{51.205}
X_{T} \sim \mathscr{W}_{d}\left(\beta,V\left(0\right),V\left(0\right)^{-1}\psi\left(0\right)^{T}x\psi\left(0\right)\right),
\end{equation}
where $\mathscr{W}_{d}$ stands for non-central Wishart Distribution with parameters $d$, $\beta$, $V\left(0\right)$ and $\psi\left(0\right)^{T}x\psi\left(0\right)$ with the last parameter known as non-centrality parameter and is denoted by $\Theta$. Moreover $V\left(t\right)$ and $\psi\left(t\right)$ solve the following system of ODEs
\begin{equation}\label{51.206}
\begin{cases}
\frac{d}{dt}\psi\left(t\right) = -H\left(t\right)^{T}\psi\left(t\right) ,\\
\frac{d}{dt}V\left(t\right) = -\psi\left(t\right)^{T} Q^{T}Q \psi\left(t\right), \end{cases}
\end{equation}
with terminal conditions $\psi\left(T\right)=I_{d}$ and $V\left(T\right)=0$.
\end{prop}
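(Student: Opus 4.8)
The plan is to establish the statement in two stages: first identify the law of $X$ under $\tilde Q$ via a Girsanov change of measure, and then recognise the time-$T$ marginal of that law as a non-central Wishart distribution.

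\textbf{Stage 1 ($\tilde Q$-dynamics).} I would begin by writing the density process explicitly. Since $\eta_T$ in \eqref{4.1} is the terminal value of the genuine (uniformly integrable) $\mathbb{Q}$-martingale $\eta_t:=\mathbb{E}[\eta_T\mid\mathscr F_t]$, the tower property together with the affine bond formula of Theorem \ref{411.8} gives
\begin{equation*}
\eta_t=\frac{e^{-\int_0^t(r_s+\mu_s)\,ds}\,\tilde P(t,T)}{\tilde P(0,T)}=\exp\!\Big(-\!\int_0^t(r_s+\mu_s)\,ds-\tilde\phi(T-t,R+M)-Tr\big[\tilde\psi(T-t,R+M)X_t\big]-\log\tilde P(0,T)\Big).
\end{equation*}
Applying It\^o's formula to the exponent and invoking the Riccati system \eqref{51.20} for $\tilde\phi,\tilde\psi$, the finite-variation part of the exponent cancels against $-\tfrac12$ of the quadratic variation of its martingale part (which is precisely what makes $\eta$ a martingale), so $\eta=\mathcal{E}(L)$ with $dL_t=-2\,Tr\big[Q\,\tilde\psi(T-t,R+M)\sqrt{X_t}\,dW_t\big]$. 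Girsanov's theorem for the matrix Brownian motion $W$ then yields a $\tilde Q$-matrix Brownian motion with $dW_t=d\widetilde W_t-2\sqrt{X_t}\,\tilde\psi(T-t,R+M)\,Q^T\,dt$; substituting this into \eqref{51.17} and absorbing the additional drift into the mean-reversion matrix produces \eqref{51.203}--\eqref{51.204}.

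\textbf{Stage 2 (marginal law).} With \eqref{51.203} in hand, I would use that a time-inhomogeneous Wishart process is still affine, so for every $\Lambda\in S_d^+$,
\begin{equation*}
\tilde E\big[e^{-Tr[\Lambda X_T]}\mid X_0=x\big]=e^{-\phi_\Lambda(0)-Tr[\psi_\Lambda(0)x]},
\end{equation*}
where $\psi_\Lambda$ solves the backward matrix Riccati equation $-\dot\psi_\Lambda=\psi_\Lambda H(\cdot)+H(\cdot)^T\psi_\Lambda-2\psi_\Lambda Q^TQ\psi_\Lambda$ with $\psi_\Lambda(T)=\Lambda$, and $\phi_\Lambda$ the accompanying linear equation. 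Linearising this Riccati equation by the same doubling-of-dimension device used in Proposition \ref{411.9}, I would express $\psi_\Lambda(0)$ and $\phi_\Lambda(0)$ through the fundamental matrix $\psi$ solving $\dot\psi=-H(\cdot)^T\psi$, $\psi(T)=I_d$, and through $V(t)=\int_t^T\psi(s)^TQ^TQ\psi(s)\,ds$, which are precisely the objects in \eqref{51.206}. It then remains to recognise
\begin{equation*}
e^{-\phi_\Lambda(0)-Tr[\psi_\Lambda(0)x]}=\det\!\big(I_d+2V(0)\Lambda\big)^{-\beta/2}\exp\!\Big(-Tr\big[\Lambda\,(I_d+2V(0)\Lambda)^{-1}\psi(0)^Tx\,\psi(0)\big]\Big)
\end{equation*}
as the Laplace transform of the non-central Wishart law $\mathscr W_d\big(\beta,V(0),V(0)^{-1}\psi(0)^Tx\,\psi(0)\big)$; uniqueness of Laplace transforms on $S_d^+$ then yields \eqref{51.205}--\eqref{51.206}.

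\textbf{The main obstacle.} Stage 1 is essentially bookkeeping once the density process is written down, since $\eta$ is a true martingale by construction and no Novikov-type verification is needed. The substantive difficulty is Stage 2: for integer $\beta$ (at least $\beta\ge d$) one can short-cut it by representing $X$ through $\beta$ independent $d$-dimensional Ornstein--Uhlenbeck processes $dU_t=H(t)U_t\,dt+Q^T\,dB_t$ and reading off the non-central Wishart parameters from their Gaussian laws, but for general $\beta\ge d-1$ this representation is unavailable, so the non-central Wishart form must instead be obtained either through an additivity-in-$\beta$ / analytic-continuation argument or, as above, by explicitly solving the matrix Riccati system and matching Laplace transforms. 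The careful identification of the scale $V(0)$, the degrees of freedom $\beta$, and the non-centrality $V(0)^{-1}\psi(0)^Tx\,\psi(0)$ — the latter requiring $V(0)$ to be invertible, which is where the hypothesis $Q\in GL_d$ enters — is where the real work lies.
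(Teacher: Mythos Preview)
The paper does not actually prove this proposition: it merely states the result and refers the reader to \cite{Deelestra} and \cite{Kang} for details. Your two-stage plan --- a Girsanov argument to obtain the $\tilde Q$-drift, followed by identification of the time-$T$ marginal via the affine Laplace transform and linearisation of the backward Riccati equation --- is precisely the route taken in those references, so your proposal is in line with what the cited literature does and supplies considerably more than the paper itself.

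One point worth tightening in Stage~1: when you substitute the Girsanov drift $-2\sqrt{X_t}\,\tilde\psi\,Q^T\,dt$ into the two diffusion terms $\sqrt{X_t}\,dW_tQ$ and $Q^T\,dW_t^T\sqrt{X_t}$, each contributes a $-2$ correction, which naively yields $H(t)=H-2Q^TQ\tilde\psi$ rather than the stated $H-Q^TQ\tilde\psi$. Make sure you are consistent with the convention used for the infinitesimal generator \eqref{51.18} (the factor $2$ in front of $xD^SQ^TQD^S$) and with the Riccati equation \eqref{51.20}; the apparent discrepancy is a matter of how the quadratic-variation constant is normalised in the Wishart SDE, and the cited references fix it unambiguously. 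Apart from this bookkeeping, your Stage~2 is exactly right, including the observation that $Q\in GL_d$ is what guarantees $V(0)\in S_d^{++}$ so that the non-centrality parameter is well defined.
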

We now state two propositions in context of non-central Wishart Distribution which are very important for the derivation of bounds for the GAO in the Wishart case (c.f. \cite{Pfaffel} and \cite{Gupta})

\begin{prop}\label{411.11}(Laplace Transform of Non-Central Wishart Distribution)
Let $X_{T} \sim \mathscr{W}_{d}\left(\beta,V\left(0\right), \Theta\right)$ with $\Theta = V\left(0\right)^{-1}\psi\left(0\right)^{T}x\psi\left(0\right)$. Then the Laplace transform of $X_{T}$ is given by
\begin{equation}\label{51.207}
\mathscr{L}\left(U\right)=\tilde{E}\left[e^{Tr\left[-UX_{T} \right]}\right]=\det \left(I_{d}+2V\left(0\right)U\right)^{-\frac{\beta}{2}}e^{Tr\left[ -\Theta\left(I_{d}+2V\left(0\right)U\right)^{-1}V\left(0\right)U \right]}
\end{equation}
where $U \in S_{d}^{+}$.
\end{prop}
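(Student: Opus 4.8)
\textbf{Proof proposal for Proposition \ref{411.11}.}

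The plan is to derive the Laplace transform of the non-central Wishart distribution directly from its density, or equivalently from the known formula for the moment generating function of a non-central Wishart matrix. First I would recall that if $X_T \sim \mathscr{W}_{d}(\beta, \Sigma, \Theta)$ with scale matrix $\Sigma = V(0)$ and non-centrality parameter $\Theta$, then $X_T$ can be represented as $\sum_{k=1}^{\beta} \xi_k \xi_k^{T}$ where the $\xi_k$ are independent $d$-dimensional Gaussians with covariance $\Sigma$ and suitably chosen means encoding $\Theta$ (this requires $\beta$ integral; the general case follows by analytic continuation in $\beta$, which is standard for Wishart laws). The key computation is then the Gaussian integral $\tilde{E}[e^{-\xi^{T} U \xi}]$ for a single non-centrally distributed Gaussian vector $\xi$, which produces a determinant factor $\det(I_d + 2\Sigma U)^{-1/2}$ together with an exponential quadratic-form correction in the mean; raising to the power $\beta$ and reorganising the mean contributions into the trace expression $Tr[-\Theta (I_d + 2V(0)U)^{-1} V(0) U]$ yields \eqref{51.207}.

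Concretely, the steps I would carry out are: (i) write $\tilde{E}[e^{Tr[-U X_T]}] = \tilde{E}[e^{-Tr[U X_T]}]$ and use $Tr[U \xi \xi^{T}] = \xi^{T} U \xi$ to reduce the matrix exponential to a sum of scalar quadratic forms; (ii) complete the square in each Gaussian integral, obtaining $\det(I_d + 2\Sigma U)^{-1/2}$ per factor and collecting the mean terms; (iii) identify the accumulated mean contribution with the non-centrality parameter $\Theta$ via the defining relation $\Theta = V(0)^{-1}\psi(0)^{T} x \psi(0)$, using the Woodbury-type identity $I_d - 2\Sigma U (I_d + 2\Sigma U)^{-1} = (I_d + 2\Sigma U)^{-1}$ to bring the exponent into the stated form; (iv) assemble the $\beta$ copies to get the exponent $-\tfrac{\beta}{2}$ on the determinant. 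Alternatively, and more cleanly, I would simply invoke the classical closed-form m.g.f. of the non-central Wishart law as recorded in \cite{Gupta}, substitute $U \in S_d^{+}$ in place of the m.g.f. argument (a sign change), and verify convergence of the integral, which holds because $I_d + 2V(0)U$ is positive definite for $U, V(0) \in S_d^{+}$ so that $\det(I_d + 2V(0)U) > 0$.

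The main obstacle I anticipate is purely bookkeeping rather than conceptual: correctly tracking the non-centrality parameter through the change of variables so that the final exponent appears in the symmetric trace form $Tr[-\Theta(I_d+2V(0)U)^{-1}V(0)U]$ rather than some equivalent but less tidy expression, and ensuring the matrix identities are applied on the correct side (since matrix multiplication is non-commutative, care is needed that $V(0)U$ and $U V(0)$ are not conflated inside the trace, where cyclicity of $Tr$ rescues us). Checking the positive-definiteness condition guaranteeing the determinant is strictly positive and the Laplace transform is finite is the only analytic point, and it is immediate from $U, V(0) \succeq 0$. Since the paper states this proposition with a citation to \cite{Pfaffel} and \cite{Gupta}, I would keep the argument brief, presenting it essentially as a specialisation of the standard non-central Wishart m.g.f. formula with the sign of the argument flipped.
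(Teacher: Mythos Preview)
Your proposal is correct and in fact goes beyond what the paper does: the paper states this proposition without proof, merely citing \cite{Pfaffel} and \cite{Gupta} for the standard non-central Wishart Laplace/m.g.f.\ formula. Your alternative of simply invoking that classical result with the appropriate sign flip is precisely the paper's approach, and your Gaussian-integral sketch is a legitimate self-contained derivation that the paper does not attempt.
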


\begin{prop}\label{411.12}(Characteristic Function of Non-Central Wishart Distribution)
Consider $X_{T} \sim \mathscr{W}_{d}\left(\beta,V\left(0\right), \Theta\right)$ with $\Theta = V\left(0\right)^{-1}\psi\left(0\right)^{T}x\psi\left(0\right)$. Then the Characteristic Function of $X_{T}$ is given by
\begin{equation}\label{51.208}
\phi_{X_{T}}\left(\Lambda\right)=\tilde{E}\left[e^{Tr\left[ i\Lambda X_{T} \right]}\right]=\det \left(I_{d}-2iV\left(0\right)\Lambda\right)^{-\frac{\beta}{2}}e^{Tr\left[ i\Theta\left(I_{d}-2V\left(0\right)\Lambda\right)^{-1}V\left(0\right)\Lambda \right]}
\end{equation}
where $\Lambda \in M_{d}$.
\end{prop}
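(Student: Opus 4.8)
The plan is to derive the characteristic function directly from the Laplace transform of the non-central Wishart law recorded in Proposition~\ref{411.11}, by analytic continuation in the matrix argument. Heuristically one expects $\phi_{X_{T}}(\Lambda)=\mathscr{L}(-i\Lambda)$: replacing $U$ by $-i\Lambda$ in \eqref{51.207} sends $\det(I_{d}+2V(0)U)^{-\beta/2}$ to $\det(I_{d}-2iV(0)\Lambda)^{-\beta/2}$ and $\exp\bigl(Tr[-\Theta(I_{d}+2V(0)U)^{-1}V(0)U]\bigr)$ to $\exp\bigl(Tr[i\Theta(I_{d}-2iV(0)\Lambda)^{-1}V(0)\Lambda]\bigr)$, which is exactly the right-hand side of \eqref{51.208}. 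So there is no new computation to perform; the content lies in justifying this substitution, and as a sanity check one should confirm that the expression collapses to the classical central-Wishart characteristic function when $\Theta=0$.

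First I would record that the non-central Wishart distribution $\mathscr{W}_{d}(\beta,V(0),\Theta)$ has exponential moments, so that the moment generating function $S\mapsto\tilde{E}\bigl[e^{Tr[SX_{T}]}\bigr]$ is finite on an open neighbourhood $\mathcal{N}$ of the origin in the space of real symmetric matrices and, by dominated convergence together with differentiation under the integral sign, is real-analytic there; moreover, reading \eqref{51.207} with $U=-S$ gives $\tilde{E}\bigl[e^{Tr[SX_{T}]}\bigr]=\det(I_{d}-2V(0)S)^{-\beta/2}\exp\bigl(Tr[\Theta(I_{d}-2V(0)S)^{-1}V(0)S]\bigr)$ for $S\in\mathcal{N}$. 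Next I would complexify: for complex symmetric $Z$ set $g(Z):=\tilde{E}\bigl[e^{Tr[ZX_{T}]}\bigr]$, which is holomorphic (jointly in the entries of $Z$) on the tube domain $\mathcal{D}=\{Z:\ \mathrm{Re}\,Z\in B\}$, where $B\subset\mathcal{N}$ is a ball about the origin, because $\bigl|e^{Tr[ZX_{T}]}\bigr|=e^{Tr[(\mathrm{Re}\,Z)X_{T}]}$ is dominated uniformly on $B$ by a fixed integrable function. The set $\mathcal{D}$ is convex, hence connected; it contains the real ball $B$ as well as every purely imaginary $i\Lambda$ with $\Lambda$ symmetric; and on $B$ it coincides with $R(Z):=\det(I_{d}-2V(0)Z)^{-\beta/2}\exp\bigl(Tr[\Theta(I_{d}-2V(0)Z)^{-1}V(0)Z]\bigr)$, the branch of $w\mapsto w^{-\beta/2}$ being pinned by $R(0)=1$.

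It then remains to extend $R$ holomorphically over all of $\mathcal{D}$ and apply the identity theorem: a short linear-algebra argument (conjugate by $V(0)^{1/2}$, noting that $V(0)\succ 0$ from \eqref{51.206} and that $\mathrm{Re}\,Z\in B$ keeps the symmetric part positive definite while the skew part stays real, so there is no zero eigenvalue) shows $I_{d}-2V(0)Z$ is invertible on $\mathcal{D}$, whence $R$ is well defined and holomorphic on $\mathcal{D}$. Since $g$ and $R$ are holomorphic on the connected set $\mathcal{D}$ and agree on the non-empty relatively open real subset $B$, the identity theorem for holomorphic functions of several variables forces $g\equiv R$ on $\mathcal{D}$; evaluating at $Z=i\Lambda$ gives \eqref{51.208} for symmetric $\Lambda$, which is the case that enters the computation of the bounds (indeed $Tr[\Lambda X_{T}]$ depends only on the symmetric part of $\Lambda$). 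The step I would treat most carefully is the bookkeeping of the fractional power $\det(\cdot)^{-\beta/2}$: one must verify that the branch fixed near the origin continues single-valuedly throughout $\mathcal{D}$ — equivalently, that $\mathcal{D}$ avoids the zero locus of $\det(I_{d}-2V(0)Z)$ and is simply connected — so that the identification of $g$ with $R$, and hence formula \eqref{51.208}, is unambiguous. If one prefers to bypass these analytic niceties, the statement is classical and may be cited directly from \cite{Gupta}.
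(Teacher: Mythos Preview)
Your argument is correct, but you should know that the paper does not actually prove this proposition: both Proposition~\ref{411.11} and Proposition~\ref{411.12} are simply \emph{stated} with a parenthetical citation to \cite{Pfaffel} and \cite{Gupta}, and no derivation is given. Your closing sentence --- that one may bypass the analytic continuation and cite \cite{Gupta} directly --- is in fact precisely the route the paper takes. So you have supplied a genuine proof where the paper offers none; the analytic-continuation strategy (extend the real MGF to a holomorphic function on a tube domain, match it against the closed-form on a real ball, invoke the identity theorem, then specialise to the imaginary axis) is the standard and correct way to pass from \eqref{51.207} to \eqref{51.208}, and your attention to the branch of $\det(\cdot)^{-\beta/2}$ and to the invertibility of $I_{d}-2V(0)Z$ on the tube is exactly the care the argument requires.

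One small remark: your substitution $U\mapsto -i\Lambda$ produces $\exp\bigl(Tr[\,i\Theta(I_{d}-2iV(0)\Lambda)^{-1}V(0)\Lambda\,]\bigr)$, whereas the displayed formula \eqref{51.208} in the paper has $(I_{d}-2V(0)\Lambda)^{-1}$ in the exponent, without the factor $i$. Your version is the correct one and is what the Laplace transform actually yields; the discrepancy is a typographical slip in the paper's statement, not an error in your derivation.
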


\subsubsection{The Lower Bound $\mbox{GAOLB}^{\left(WIS\right)}$}
Under the Wishart set up, lower bound $\mbox{GAOLB}$ obtained in equation \eqref{6.10} reduces to  a very neat form. Before arriving at the formula, we define the following notations in the spirit of section 4.
\begin{equation}\label{51.209}
S_{T}^{\left(i\right)}=S_{0}^{\left(i\right)}e^{X_{T}^{\left(i\right)}};\;i=1,2,...,n-1,
\end{equation}
where
\begin{equation}\label{51.210}
S_{0}^{\left(i\right)}=e^{-\left(\left(\bar{r}+\bar{\mu}\right)i+\tilde{\phi}\left(i,R+M\right)\right)}
\end{equation}
and
\begin{equation}\label{51.211}
X_{T}^{\left(i\right)}=-Tr\left[\tilde{\psi}\left(i,R+M\right)X_{T}\right],
\end{equation}
where $\tilde{\psi}\left(i,R+M\right)$ and $\tilde{\phi}\left(i,R+M\right)$ for $i=1,2,...,n-1$ are given by the system of equations \eqref{51.200} with $\tau=i$.  Further, $X_{T}$ has a non-central Wishart distribution with Laplace transform given in equation \eqref{51.207}. This result along with the formula \eqref{6.10a}, the lower bound for the Wishart case manifests itself into the following form.
\begin{eqnarray}\label{51.212}
\mbox{ GAOLB}^{\left(WIS\right)} & = & g \tilde{P}\left(0,T\right)\Bigg({\displaystyle \sum_{i=1}^{n-1}}\Bigg(e^{-\left(\left(\bar{r}+\bar{\mu}\right)i+\tilde{\phi}\left(i,R+M\right)\right)}\det \left(I_{d}+2V\left(0\right)\tilde{\psi}\left(i,R+M\right)\right)^{-\frac{\beta}{2}}\nonumber\\
& {} & {}\times e^{Tr\left[ -\Theta\left(I_{d}+2V\left(0\right)\tilde{\psi}\left(i,R+M\right)\right)^{-1}V\left(0\right)\tilde{\psi}\left(i,R+M\right) \right]}\Bigg)-\left(K-1\right)\Bigg)^{+}
\end{eqnarray}

\subsubsection{The Upper Bound $\mbox{GAOUB}^{\left(WIS\right)}$}
Under the formulation of the assets in the basket under the Wishart case (\eqref{51.209}-\eqref{51.211}), we have
\begin{equation}\label{51.212a}
Y_{0}^{\left(n-1\right)}=-\frac{\left(\bar{r}+\bar{\mu}\right)n}{2}-\frac{1}{n-1} {\displaystyle \sum_{k=1}^{n-1}} \tilde{\phi}\left(k,R+M\right)
\end{equation}
using the definition of log-geometric average $Y_{T}^{\left(n-1\right)}$ given in equation \eqref{20.20} in Section 6. Further, obtaining the upper bound for the GAO in the Wishart set up is a straightforward exercise as one can exploit the upper bound GAO formula under the affine case given in equation \eqref{20.37} by calculating the Laplace transform given in equation \eqref{51.207} such that for $k=1,2,...,n-1$,
\begin{equation}\label{51.213}
\mathscr{L}\left(\tilde{\psi}\left(k,R+M\right)\right)=\det \left(I_{d}+2V\left(0\right)\tilde{\psi}\left(k,R+M\right)\right)^{-\frac{\beta}{2}}e^{Tr\left[ -\Theta\left(I_{d}+2V\left(0\right)\tilde{\psi}\left(k,R+M\right)\right)^{-1}V\left(0\right)\tilde{\psi}\left(k,R+M\right) \right]}
\end{equation}
and calculating $\phi_{X_{T}}\left(-\frac{\left(\eta-i\left(\delta+1\right)\right)}{n-1}\sum_{k=1}^{n-1}\tilde{\psi}\left(k,R+M\right)\right)$ and $\phi_{X_{T}}\left(\frac{i}{n-1}\sum_{k=1}^{n-1}\tilde{\psi}\left(k,R+M\right)\right)$ from the formula \eqref{51.208} by replacing $\Lambda$ by $-\frac{\left(\eta-i\left(\delta+1\right)\right)}{n-1}\sum_{k=1}^{n-1}\tilde{\psi}\left(k,R+M\right)$ and $\frac{i}{n-1}\sum_{k=1}^{n-1}\tilde{\psi}\left(k,R+M\right)$ respectively.

\section{Numerical Results}
Now we investigate the applications of the theory derived in the previous sections. We have successfully obtained a number of lower bounds and an upper bound for Guaranteed Annuity Options in sections 5 and 6. We now test these vis-a-vis the well-known Monte Carlo estimate for the GAO. We carry out this working for a couple of more general affine models. The nomenclature for the bounds has already been specified in sections 5, 6 and 7. In all the examples, we have the following `Contract Specification':
\[
g=11.1\%,\; T=15,\;  n=35;
\]

\subsection{Multi CIR Model}

First we consider a 3-dimensional CIR process $X:=\left(X_{t}\right)_{t\geq 0}$ having independent components $\left(X_{it}\right)_{t\geq 0}$, $i=1,2,3$ (c.f. \cite{Deelestra} for details). We assume the following dynamics for the interest rate process and the mortality process.
\begin{equation}\label{10.1}
r_{t}=\bar{r}+X_{1t}+X_{2t}
\end{equation}
and
\begin{equation}\label{10.2}
\mu_{t}=\bar{\mu}+m_{2}X_{2t}+m_{3}X_{3t},
\end{equation}
where $\bar{r}$, $\bar{\mu}$, $m_{2}$ and $m_{3}$ are constants. We use model specifications similar to \cite{Deelestra} and make a minute alteration in the parameter set. We fix the value of $m_{2}$ and obtain the value of $m_{3}$ such that the expectation of the mortality is fixed to a specified level denoted by $C_{x}\left(T\right)$ which is predicted by e.g. a Gompertz-Makeham model (c.f. \cite{Dickson}) at age $x+T$ for an individual aged x at time 0, i.e.,
\begin{equation}\label{10.3}
\mathbb{E}\left[\mu_{t}\right]=C_{x}\left(T\right),
\end{equation}
Applying expectation on both sides of equation \eqref{10.2} and substituting in \eqref{10.3} we get
\begin{equation}\label{10.4}
\bar{\mu}+m_{2}\mathbb{E}\left[X_{2t}\right]+m_{3}\mathbb{E}\left[X_{3t}\right]=C_{x}\left(T\right),
\end{equation}
where as $X_{it};\;i=1,2,3$ is obtained using the Stochastic Differential Equation (SDE) given by \eqref{50.1}, we have
\begin{equation}\label{10.5}
\mathbb{E}\left[X_{it}\right]=X_{i,0}e^{-k_{i}T}+\theta_{i}\left(1-e^{-k_{i}T}\right).
\end{equation}
Using our contract specifications outlined in the beginning of this section we fix the expected value in \eqref{10.3} to the level $C_{50}\left(15\right)=0.0125$. A very good discussion in regards to the validity of the model to be used for mortality appears in \cite{Deelestra}. In fact this model was completely calibrated in \cite{Chiarella}.

Using the set up defined by equations \eqref{10.1}-\eqref{10.2}, the linear pairwise correlation between $\left(r_{t}\right)_{t\geq 0}$ and $\left(\mu_{t}\right)_{t\geq 0}$, denoted by $\rho_{t}$ forms a stochastic process given by
\begin{equation}\label{10.6}
\rho_{t}=\frac{m_{2}\sigma_{2}^{2}X_{2t}}{\sqrt{\sigma_{1}^{2}X_{1t}+\sigma_{2}^{2}X_{2t}}\sqrt{m_{2}^2\sigma_{2}^{2}X_{2t}+m_{2}^2\sigma_{2}^{2}X_{2t}}}.
\end{equation}
We vary the value of $m_{2}$ and therefore obtain the value of $m_{3}$ using equation \eqref{10.3} and this finally yields the value of $\rho$. Further in line with \cite{Deelestra}, we make the following parameter specifications
\[
\bar{r}=-0.12332,\;\;\bar{\mu}=0
\]

\begin{table}[ht]
\centering      
\begin{tabular}{c c c c c c c c}  
\hline\hline\\[0.01ex]                     
CIR process &   &  Parameters &   &  & \\ [1ex] 
\hline\hline\\[0.5ex]                  
$X_{1}$ & $k_{1}=0.3731$ & $\theta_{1}=0.074484$ & $\sigma_{1}=0.0452$ & $X_{1,0}=0.0510234$ \\[1ex]
$X_{2}$ & $k_{2}=0.011\;$ & $\theta_{1}=0.245455$ & $\sigma_{2}=0.0368$ & $X_{2,0}=0.0890707$ \\[1ex]
$X_{3}$ & $k_{3}=0.01\;\;$ & $\theta_{1}=0.0013\;\;$ & $\sigma_{3}=0.0015$ & $X_{3,0}=0.0004\;\;\;\;$ \\[1ex]
\hline\hline     
\end{tabular}
\label{table1}  
\caption{Parameter Values for the 3-dimensional CIR process} 
\end{table}

Table 2 depicts the lower bound, the upper bound and the Monte Carlo estimate of the GAO price for different values of $m_{2}$ and therefore for different values of the initial pairwise linear correlation coefficient $\rho_{0}$. We find that an increase in the value of $\rho_{0}$ enhances the value of the GAO. The lower bound is extremely sharp. On the other hand, upper bound is slightly wider. The results of Table 2 are portrayed in Figures 1-3.

Figure 2 reflects that the relative difference ($=\frac{|bound-MC|}{MC}$) between the upper bound and the benchmark Monte Carlo estimate decreases with an increase in the correlation between mortality and interest rate while the relative difference for the lower bound almost remains constant with varying $\rho_{0}$. On the other hand, figure 3 depicts the absolute difference between the Monte Carlo estimate of the GAO price and the derived bounds which remain more or less constant. The lower bound fares much better than $GAOUB$. Finally figure 4 shows the price bounds and in fact the lower bound stick completely camouflages with that of the MC estimate which is a testimony to the tightness of the lower bound.

\begin{table}[ht]
\centering      
\begin{tabular}{|c||c||c|c|c|c|c|c|}  
\hline
$\;\;m_{2}\;\;$ & $\;\rho\;$ & $\;\;\;\;\;\mbox{GAOLB}^{\left(MCIR\right)}\;\;\;\;$  & $\;\;MC\;\;\;\;\;\;\;\;\;\;\;$ & $\;\;\;\;\;\mbox{GAOUB}^{\left(MCIR\right)}\;\;$ \\ \hline\hline
-0.300 & -0.570960646515027 & 0.153351236437789 & 0.153431631010533 & 0.216286630652776 \\
-0.100 & -0.460513730466363 & 0.181641413947461 & 0.181871723226662 & 0.243710313225013 \\
-0.070 & -0.403426257094426 & 0.187186872445969	& 0.187285214852833 & 0.249173899703122 \\
-0.060 & -0.376271648827787	& 0.189122188373390	& 0.189373949402726	& 0.251083730739797 \\
-0.050 & -0.343007585286942	& 0.191102351580502	& 0.191474297920361	& 0.253039217047040 \\
-0.040 & -0.301756813619030	& 0.193128263182051	& 0.195421232722993	& 0.255041205164633 \\
-0.030 & -0.250041147986350	& 0.195200853300304	& 0.195132243321684	& 0.257090572307459 \\
-0.020 & -0.184739400604580	& 0.197321081986930	& 0.197531098187496	& 0.259188227324353 \\
-0.010 & -0.102346730178820	& 0.199489940182500	& 0.199619257104038	& 0.261335111674878 \\
-0.001 & -0.011167160239806	& 0.201484335480591	& 0.201710195921424	& 0.263310203859562 \\
0.000 & 0.000000000000000 & 0.201708450715130 &	0.201879045816498 &	0.263532200435103 \\
0.001 & 0.011370596893292 & 0.201933073002533 &	0.202090425152612 &	0.263754709122691 \\
0.010 & 0.122142590872118 & 0.203977669339908 &	0.204292134604299 &	0.265780503352825 \\
0.020 & 0.257493768936871 & 0.206298685820891 &	0.206369996912367 &	0.268081065972310 \\
0.030 &	0.391761086281179 & 0.208672625057373 &	0.208709896009824 &	0.270434970824946 \\
0.040 &	0.508145173072700 & 0.211100648256358 &	0.211180180724315 &	0.272843338639536 \\
0.050 &	0.596334605305204 & 0.213583954153270 &	0.213584231985838 &	0.275307329501738 \\
0.060 &	0.656025897318996 & 0.216123780282872 &	0.216228415988778 &	0.277828143936307 \\
0.070 &	0.693071640464574 & 0.218721404302618 &	0.218840241843838 &	0.280407024005732 \\
0.100 &	0.730953349866014 & 0.226874471461256 &	0.226934772658478 &	0.288505131181583 \\
\hline
\end{tabular}
\label{table30}  
\caption{Lower Bounds and Upper Bound $\mbox{GAOUB}$ for Guaranteed Annuity Option under the MCIR Model with partial parameter choice in accordance with \cite{Deelestra}. MC Simulations: 50000}
\end{table}

\subsection{Wishart Model}

As a final step we test our bounds in the backdrop of the celebrated Wishart model for mortality and interest rate. The functional form of the model for the two aforesaid risks has been detailed in equations \eqref{51.18a}-\eqref{51.18b}. To present the application of our methodology we stick to a 2-dimensional Wishart process, i.e., $d=2$ due to the fact that higher dimensional Wishart processes are difficult to implement. The law for the underlying process $X$ governing the mortality and interest rate processes has been outlined in equation \eqref{51.17}. We consider partial choice of the parameter set in accordance with \cite{Deelestra}. For all examples considered below, let
\[
\beta=3,\;\;\bar{r}=0.04,\;\;\bar{\mu}=0,
\]
\begin{equation}\label{10.7}
H = \begin{pmatrix}
    -0.5 & 0.4 \\
    0.007 & -0.008
  \end{pmatrix},\;\;
M = \begin{pmatrix}
    0 & 0 \\
    0 & 1
  \end{pmatrix},\;\;
R = \begin{pmatrix}
    1 & 0 \\
    0 & 0
  \end{pmatrix}
\end{equation}
in equations, \eqref{51.17} and \eqref{51.18a}-\eqref{51.18b}. In light of this data, the stochastic correlation between $\left(r_{t}\right)_{t\geq 0}$ and $\left(\mu_{t}\right)_{t\geq 0}$, denoted by $\rho_{t}$ forms a stochastic process given by
\begin{equation}\label{10.9}
\rho_{t}=\frac{\left(Q_{11}Q_{12}+Q_{22}Q_{21}\right)X_{t}^{12}}{\sqrt{\left(Q_{11}^2+Q_{21}^{2}\right)X_{t}^{11}\left(Q_{22}^2+Q_{12}^{2}\right)X_{t}^{22}}}.
\end{equation}
As is evident from \eqref{10.9}, using a Wishart formulation for underlying process $X$ produces a more richer dependence structure for the underlying risks than was available under the multidimensional CIR case. This calls for carrying out a more sophisticated sensitivity analysis in regards to the involved parameters. In the same spirit as \cite{Deelestra}, we carry out a two-fold testing
\begin{itemize}
\item the first one by varying the off-diagonal elements of the initial Wishart process $X_{0}$ and investigating the impact on the prices of the GAO,
\item the second one by experimenting with the off-diagonal elements of the matrix $Q$.
\end{itemize}
In each case, we compute the bounds and compare them with the benchmark Monte Carlo value which is computed using 20000 simulations. For stability checks in relation to the expected values of the interest rate and mortality intensity w.r.t. varying correlation, interested readers can refer to \cite{Deelestra}.

\subsubsection{Effect of a Change in Initial Value $X_{0}$}
In order to see the behaviour of the price bounds for the GAO price vis-a-vis change in the initial value of the Wishart process, we experiment with two cases:
\begin{itemize}
\item Negative off-diagonal elements in the volatility matrix Q
\item Positive off-diagonal elements in the volatility matrix Q.
\end{itemize}
\bigskip

\textbf{Example 1}. In this case we consider the following Wishart process:
\begin{equation}\label{10.10}
Q = \begin{pmatrix}
    0.06 & -0.0006 \\
    -0.06 & 0.006
  \end{pmatrix},\;\;
X_{0} = \begin{pmatrix}
    0.01 & X_{0}^{12} \\
    X_{0}^{12} & 0.001
  \end{pmatrix}.\;\;
\end{equation}
Table 3 portrays the lower bound, the upper bound and the Monte Carlo estimate of the GAO price for different values of $X_{0}^{12}$ and therefore for different values of the initial pairwise linear correlation coefficient $\rho_{0}$. We find that an increase in the value of $\rho_{0}$ enhances the value of the GAO in a fashion similar to the one shown for the Multi-CIR set up in Table 3. In this case both the lower and the upper bounds show close proximity to the GAO value.
\bigskip

\begin{table}[ht]
\centering      
\begin{tabular}{|c||c||c|c|c|c|c|c|}  
\hline
$\;\;X_{0}^{12}\;\;$ & $\;\rho\;$ & $\;\;\;\;\;\mbox{GAOLB}^{\left(WIS\right)}\;\;\;\;$  & $\;\;MC\;\;\;\;\;\;\;\;\;\;\;$ & $\;\;\;\;\;\mbox{GAOUB}^{\left(WIS\right)}\;\;$ \\ \hline\hline
-0.003	& 0.734240363158475	& 0.241898614923743	& 0.241247798732840 & 0.241898616247735 \\
-0.002	& 0.489493575438983	& 0.241133565561902	& 0.240529742517039	& 0.241133567256078 \\
-0.0015	& 0.367120181579237	& 0.240751892681841	& 0.239890712272120	& 0.240751894570155 \\
-0.0005	& 0.122373393859746	& 0.239990246464251	& 0.239141473598451	& 0.239990248759807 \\
0	& 0.000000000000000	& 0.239610271506445	& 0.238621509824004	& 0.239610274015476 \\
0.0005	& -0.122373393859746	& 0.239230860904335	& 0.238198077279364 & 0.239230863633664 \\
0.0015	& -0.367120181579237	& 0.238473729539084	& 0.237679950746197	& 0.238473732730283 \\
0.002	& -0.489493575438983	& 0.238096007164703	& 0.237331879397777	& 0.238096010597887 \\
0.003	& -0.734240363158475	& 0.237342245012764	& 0.236699850447918	& 0.237342248953105 \\

\hline
\end{tabular}
\label{table3}  
\caption{Lower Bounds and Upper Bound $\mbox{GAOUB}$ for Guaranteed Annuity Option under the Wishart Model Example 1 with parameter choice in accordance with \cite{Deelestra}. MC Simulations: 20000}
\end{table}

\textbf{Example 2}. In the second investigation, we consider the following Wishart process:
\begin{equation}\label{10.11}
Q = \begin{pmatrix}
    0.06 & 0.00001 \\
    0.0002 & 0.006
  \end{pmatrix},\;\;
X_{0} = \begin{pmatrix}
    0.01 & X_{0}^{12} \\
    X_{0}^{12} & 0.001
  \end{pmatrix}.\;\;
\end{equation}
As can be seen, in this example, we consider positive off-diagonal elements for the matrix $Q$. Table 4 portrays the lower bound, the upper bound and the Monte Carlo estimate of the GAO price for different values of $X_{0}^{12}$ and therefore for different values of the initial pairwise linear correlation coefficient $\rho_{0}$. The results obtained present a sharp contrast to those obtained in Table 3 and the value of the GAO price and the corresponding bounds begin to drop as the value of $\rho_{0}$ is increases. Both the bounds continue to perform well even on this occasion.

A good justification of the behaviour of the GAO price in the first two examples (also see  \cite{Deelestra}) vis-a-vis the values of $X_{0}^{12}$ can be provided by noting that under dynamics of the Wishart process (\eqref{51.17}) the positive factors swell on an average when the initial value $X_{0}^{12}$ increases. Moreover, for the aforementioned parameter choice, the models for mortality and interest rate for $t \geq 0$ are given as
\begin{equation}\label{10.12}
r_{t}=0.04+X_{t}^{11}
\end{equation}
and
\begin{equation}\label{10.13}
\mu_{t}=X_{t}^{11}.
\end{equation}
Now, it is clear from the formula for GAO price given in equation \eqref{51.202}, that the exponential term containing $r_{t}$ and $\mu_{t}$ decays when $X_{0}^{12}$ increases and this causes the GAO price and corresponding bounds to diminish when $X_{0}^{12}$ soars.

\begin{table}[ht]
\centering      
\begin{tabular}{|c||c||c|c|c|c|c|c|}  
\hline
$\;\;X_{0}^{12}\;\;$ & $\;\rho\;$ & $\;\;\;\;\;\mbox{GAOLB}^{\left(WIS\right)}\;\;\;\;$  & $\;\;MC\;\;\;\;\;\;\;\;\;\;\;$ & $\;\;\;\;\;\mbox{GAOUB}^{\left(WIS\right)}\;\;$ \\ \hline\hline
-0.003 &	-0.004743383550130 &	0.332948404889575 &	0.341196353690094 &	0.332948737923275 \\
-0.002	& -0.003162255700087 &	0.331667762094902 &	0.340868095614857 &	0.331668129236460 \\
-0.0015 &	-0.002371691775065 &	0.331029148831226 &	0.339651861654315 &	0.331029534152954 \\
-0.0005 &	-0.000790563925022 &	0.329755328754714 &	0.339133498851769 &	0.329755752815905 \\
0 &	0.000000000000000 &	0.329120118025352 &	0.338246665341653 &	0.329120562698337 \\
0.0005	& 0.000790563925022 &	0.328486037563152 &	0.337667153845205 &	0.328486503712013 \\
0.0015	& 0.002371691775065 &	0.327221259643971 &	0.336913730200477 &	0.327221771448801 \\
0.002	& 0.003162255700087 &	0.326590558297188 &	0.336554330647556 &	0.326591094339721 \\
0.003	& 0.004743383550130 &	0.325332521129667 &	0.335045167150194 &	0.325333108616048 \\

\hline
\end{tabular}
\label{table4}  
\caption{Lower Bounds and Upper Bound $\mbox{GAOUB}$ for Guaranteed Annuity Option under the Wishart Model Example 2. MC Simulations: 20000}
\end{table}

\subsubsection{Effect of a Change in Volatility Matrix $Q$}
We now carry out an experiment to vary the off-diagonal elements of the volatility matrix $Q$ which we assume to be symmetric while specifying the initial value $X_{0}$ of the Wishart process.
\bigskip

\textbf{Example 3}. Here the Wishart process is as follows:
\begin{equation}\label{10.14}
Q = \begin{pmatrix}
    0.06 & Q_{12} \\
    Q_{12} & 0.006
  \end{pmatrix},\;\;
X_{0} = \begin{pmatrix}
    0.01 & 0.001 \\
    0.001 & 0.001
  \end{pmatrix}.\;\;
\end{equation}

Table 5 depicts the lower bound, the upper bound and the Monte Carlo estimate of the GAO price for different values of $Q_{12}$ and therefore for different values of the initial pairwise linear correlation coefficient $\rho_{0}$. The results obtained show that the value of the GAO price and the corresponding bounds do not show a monotone behaviour in respect of the linear correlation between mortality and interest rate risks. The tightness of the bounds around the Monte Carlo estimate still remains intact. These observations are echoed in Figure 7. In addition Figure 5 reflects that the relative difference $\left(=\frac{|bound-MC|}{MC}\right)$ between the lower bound and the benchmark Monte Carlo estimate increases with an increase in the correlation $\rho_{0}$ between mortality and interest rate. For example looking at table 5, we see that the relative difference for $\mbox{GAOLB}$ increases from a meagre $0.2\%$ for $\rho_{0}=-0.3$ to about $7.7\%$ for a $\rho_{0}=0.3$. However, under the same set, the relative difference between the estimated GAO price and the upper bound increases and then there is a switch at $\rho_{0}=0.3$ and this gap begins to diminish. The last observation is also seen in Figure 6 for the absolute difference between the bounds and the MC estimate of GAO price.

The reason for this behaviour of the GAO price lies in the structure of the matrix $Q^{T}Q$ (also see \cite{Deelestra}). It is clear that the diagonal elements of $Q^{T}Q$ increase with a rise in the absolute value of $Q_{12}$. A glance at the law of the Wishart process given in equation \eqref{51.17} and equations \eqref{10.12}-\eqref{10.13} brings out the fact that the drift and in particular the long term value of the positive factors of the Wishart process and in turn the drift of mortality and interest rate process is an increasing function of the absolute value of $Q_{12}$. Thus an upward rise in the value of $Q_{12}$ will enhance the positive factors. As a result, it is evident from equation \eqref{51.202} describing the GAO price in the Wishart case, that the exponential term containing $r_{t}$ and $\mu_{t}$ decreases when $Q_{12}$ moves away from zero and this causes the GAO price and corresponding bounds to diminish.

Overall our numerical experiments provide strong evidence in support of the extremely adequate performance of our proposed bounds.

\begin{table}[ht]
\centering      
\begin{tabular}{|c||c||c|c|c|c|c|c|}  
\hline
$\;\;Q_{12}\;\;$ & $\;\rho\;$ & $\;\;\;\;\;\mbox{GAOLB}^{\left(WIS\right)}\;\;\;\;$  & $\;\;MC\;\;\;\;\;\;\;\;\;\;\;$ & $\;\;\;\;\;\mbox{GAOUB}^{\left(WIS\right)}\;\;$ \\ \hline\hline
-0.01 &	-0.294220967543866 & 0.290016256883993 & 0.290601398401997 & 0.290593286187411 \\
-0.006 & -0.244746787719492 & 0.331837945818948 & 0.332421093218907 & 0.331843140669134 \\
-0.002 & -0.109938939767707 & 0.339526376457815 & 0.344143066326585 & 0.339526466816062 \\
0.002 &	0.109938939767707 &	0.308919593324378 &	0.322579113504993 &	0.308928343737340 \\
0.006 &	0.244746787719492 &	0.257040019380241 &	0.274376988651895 &	0.257891897298705 \\
0.01 & 0.294220967543866 & 0.196440417823759 &	0.212744888444368 &	0.204994244625801 \\
\hline
\end{tabular}
\label{table5}  
\caption{Lower Bounds and Upper Bound $\mbox{GAOUB}$ for Guaranteed Annuity Option under the Wishart Model Example 3 with parameter choice in accordance with \cite{Deelestra}. MC Simulations: 20000}
\end{table}

\begin{figure}[h]
\centering
\includegraphics[scale=0.8]{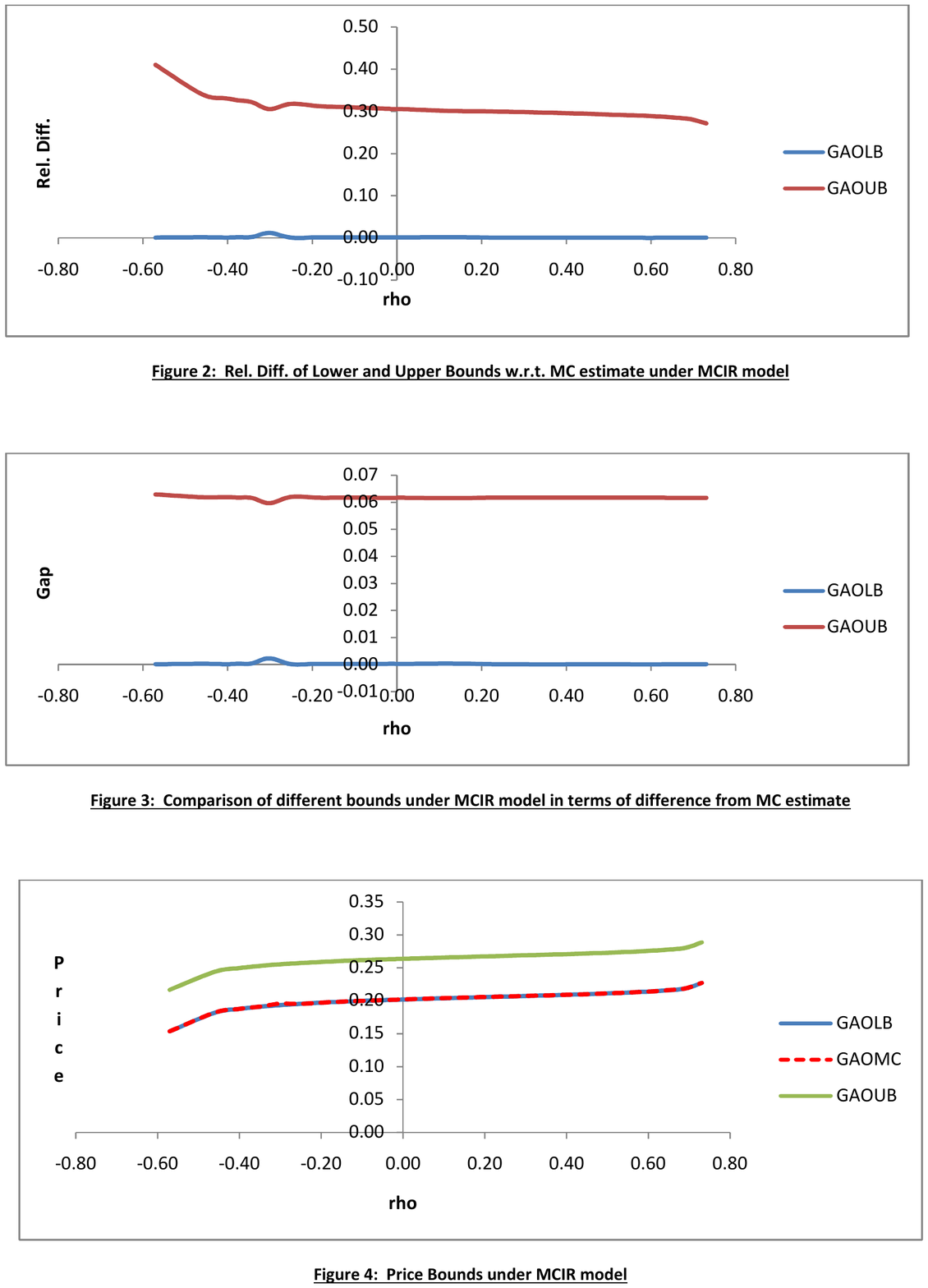}
\end{figure}

\begin{figure}[h]
\centering
\includegraphics[scale=0.8]{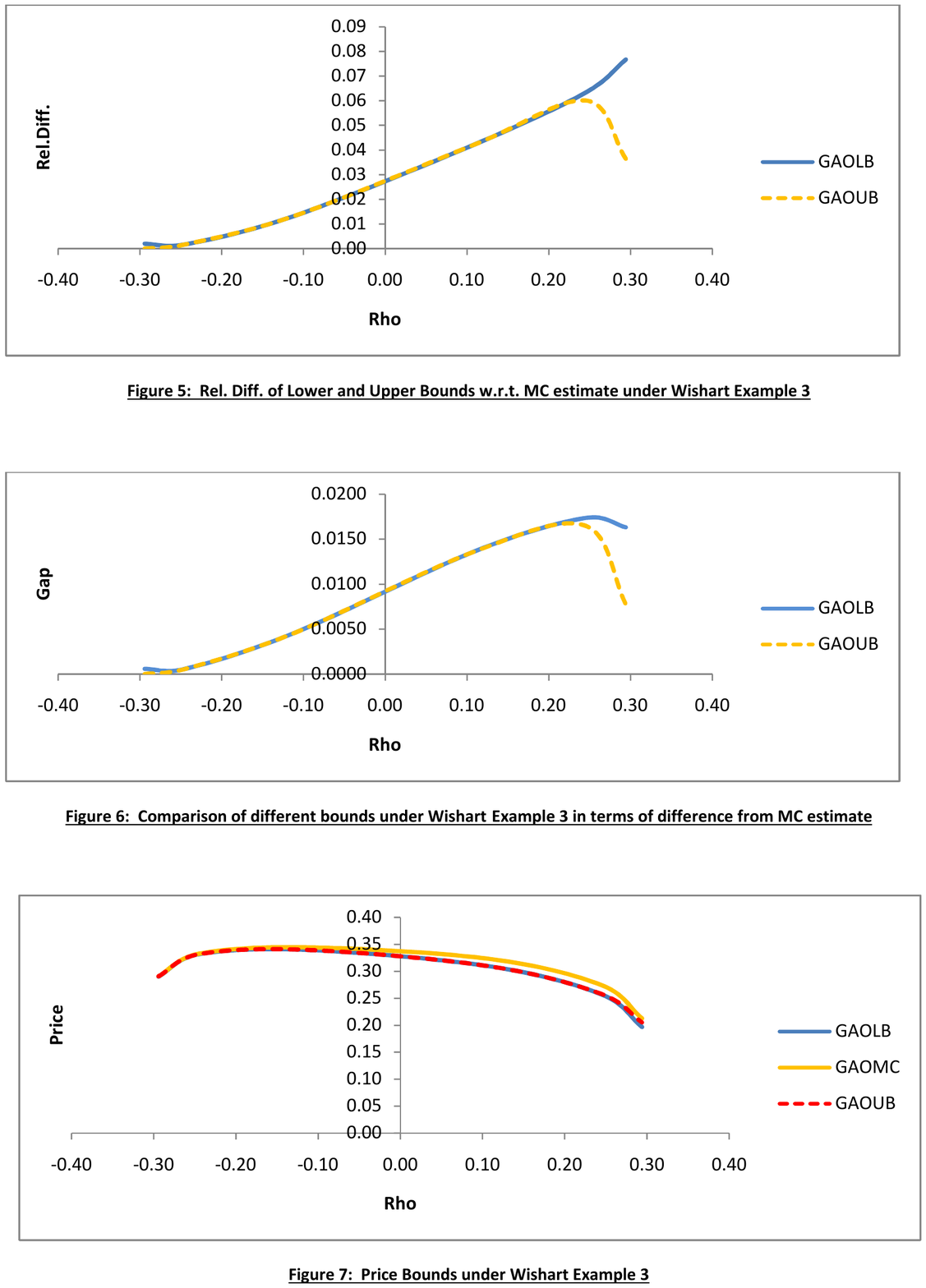}
\end{figure}

\subsection{Computational Speed of the Bounds}
We summarize the time consumed in computation of the bounds and the Monte Carlo estimate in the following table. Further, these observations are portrayed in the figure that follows.

\begin{table}[ht]

\centering      
\begin{tabular}{c c c c c c c}  
\hline\hline\\                
        &       &       & Number of & Simulations & for & Monte Carlo \\
Example & GAOLB & GAOUB & 1000 & 10000 & 20000 & 50000 \\
\hline\hline
MCIR & 0 & 1 & 44 & 352 & 696 & 1800 \\
Wishart 1 & 0 & 1 & 47 & 369 & 749 & 2100 \\       
Wishart 2 & 0 & 1 & 49 & 379 & 757 & 2200 \\
Wishart 3 & 0 & 1 & 43 & 359 & 724 & 2000\\
\hline\hline     
\end{tabular}
\label{table100}  
\caption{Time taken in seconds for Bounds and Simulations} 
\end{table}

\begin{figure}[H]
    \centering
      \includegraphics[clip, trim=0.7cm 15.2cm 0.9cm 4cm, width=1.00\textwidth]{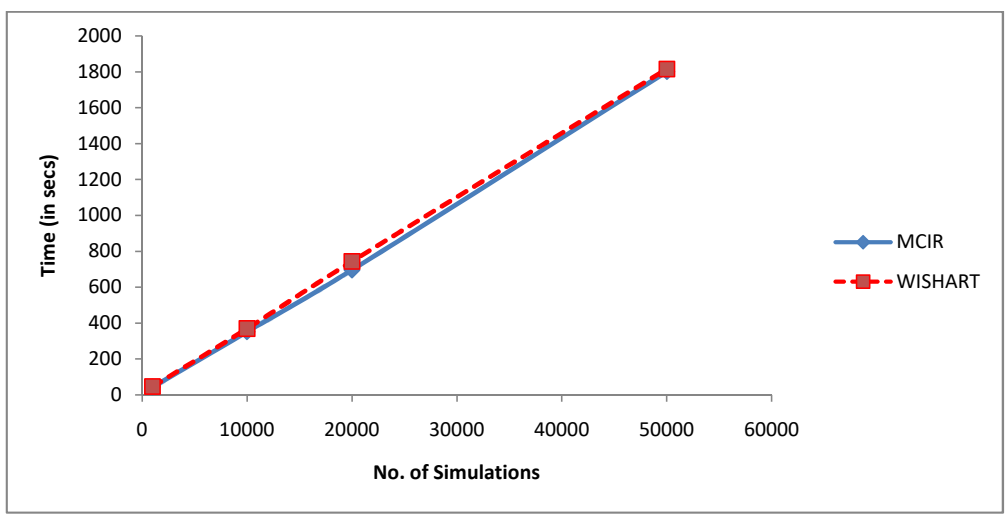}
    \caption{The CPU time (seconds) for MCIR and Wishart (average for 3 cases)}
    \label{fig2}
\end{figure}
All computations in Section 8 are carried out on a personal laptop with Intel(R) Core(TM) i5 CPU-M450 at 2.40 GHz and a RAM of 4.00 GB.

\section{Conclusions}
We have derived some very general bounds for the valuation of GAO's under the assumption of a prevailing correlation between mortality and interest rate risk. These bounds serve as a useful tool for financial institutions which are striving hard to find methodologies that offer efficient pricing of longevity linked securities. The techniques used in this paper are successful in circumventing the issue of dealing with sums of a large number of correlated variables. Moreover they are extremely useful in reducing the burden of dealing with cumbersome stochastic processes.



The most successful finding of this research is that in the affine case, both the lower and the upper bound depend on the properties of the distribution of the random variables connected to the transformed stochastic processes underlying mortality and interest rate. Moreover the lower bound manifests itself in form of Laplace transform of the underlying random variable while the upper bound reveals itself in the form of the associated characteristic function. Both of these tools are the most conveniently obtained vital statistics for any distribution. The most satisfying aspect is that we need to work in one dimension, in contrast to what would have been atleast a 34-dimensional set up, assuming that a person lives atleast 100 years making $n=35$.


Another feather in the cap of the bounds is their computational speed. As indicated in the previous section, the Monte Carlo method is extremely slow for large number of simulations in case of sophisticated models. As a result given the same time budget, Monte Carlo estimates are deemed to be extremely inaccurate. Moreover for highly sophisticated multivariate distributions like non-central Wishart, generating random samples generally involves complex algorithms, which are not inbuilt in libraries of packages such as MATLAB. It is indeed very satisfactory that our lower bound takes just 0.133 or 0.192 seconds on an average to execute in the MCIR and Wishart case while an average of about 0.286 or 0.659 seconds are required by the upper bound respectively in the two cases. In other words pricing of complex GAOs can be done in no time.

Last but not the least the fact that the upper bound performs much better in the case of Wishart is a noteworthy observation since the Wishart model is much more intricate in terms of gauging the mesh of correlation between mortality and interest rate.

The sensitivity analysis done in this article reiterates the fact that it is not possible to explain the value of a GAO completely in terms on the initial pairwise linear correlation between mortality and interest rate risks as highlighted by the Wishart model (c.f. \cite{Deelestra} for earlier work in this direction). This finding sends alarm signals for the risk management in the presence of an unknown dependence as various scenarios are possible.

If the prices of a GAO increase with the (initial) linear correlation coefficient as in the multi-CIR model or the Wishart specifications in Example 1, then the most risk-averse methodology when pricing a GAO would be to take the linear correlation coefficient equal to unity. This will protect the seller from an awkward scenario of underestimation of the GAO price in the event of a high correlation. However, Example 2 of the Wishart case, presents an opposite scenario where prices decrease with increasing initial linear correlation and therefore, risk-adverse seller would adopt the opposite rule in that situation. Example 3 in the Wishart case portrays prices which are not monotone
with respect to the correlation, but which seem to lead to the highest prices for zero correlation. Therefore, in this situation, choosing zero correlation might be the appropriate risk-averse choice. In fact the Wishart model comes across the most versatile model presenting all possible dependence scenarios.

The methodology proposed in this paper is extremely flexible and can be easily extended to value other insurance products such as indexed annuities or to instruments with option embedded features such as equity-linked annuities, equity-indexed annuities and variable annuities.



\appendix
\section{Appendix A}
\begin{defn} \label{411.2} \textbf{\emph{Affine Process}} A time-homogeneous Markov process $X$ relative to some filtration $\left(\mathscr{F}_{s}\right)$ and with state space $\left(D,\mathscr{D}\right)$ (augmented by $\Delta$) is called affine if\\

(i) it is stochastically continuous, that is, $\lim_{s \rightarrow t}p_{s}\left(x, \cdot\right)=p_{t}\left(x, \cdot\right)$ for all $t\geq 0$ and $x \in D$, and
\\[12pt]
(ii) its Fourier-Laplace transform has exponential affine dependence on the initial state. This means that there exist functions $\phi:\mathbb{R}_{+}\times S_{d}^{+}\rightarrow \mathbb{R}_{+}$ and $\psi:\mathbb{R}_{+}\times S_{d}^{+}\rightarrow S_{d}^{+}$ such that
\begin{equation}\label{51.2}
\mathbb{E}_{x}\left[e^{\langle u,X_{t}\rangle}\right]=P_{t}e^{\langle u,x\rangle}=\int_{D} e^{\langle u,\xi\rangle}p_{t}\left(x, d\xi \right)=e^{-\phi\left(t,u\right)-\langle \psi\left(t,u\right),x\rangle},
\end{equation}
for all $x\in D$ and $\left(t,u\right)\in \mathbb{R}_{+}\times \mathbb{R}_{d}$
\end{defn}

\begin{defn} \label{411.5} \textbf{\emph{Truncation Function}} Let $\chi:S_{d}\rightarrow S_{d}$ be some bounded continuous truncation function with $\chi\left(\xi\right)=\xi$ in the neighborhood of $0$. An admissible parameter set given by $\left(\alpha, b, \beta^{ij}, c, \gamma, m, \mu\right)$ associated with $\chi$ consists of:
\begin{itemize}
\item a linear diffusion coefficient
\begin{equation}\label{51.3a}
\alpha \in S_{d}^{+},
\end{equation}
\item a constant drift term
\begin{equation}\label{51.3b}
b \succeq \left(d-1\right)\alpha,
\end{equation}
\item a constant killing rate term
\begin{equation}\label{51.3c}
c \in \mathbb{R}^{+},
\end{equation}
\item a linear killing rate coefficient
\begin{equation}\label{51.3d}
\gamma \in S_{d}^{+},
\end{equation}
\item a constant jump term: a Borel measure $m$ on $S_{d}^{+}\setminus \{0\}$ satisfying
\begin{equation}\label{51.3e}
\int_{S_{d}^{+}\setminus \{0\}}\left(\parallel \xi \parallel \wedge 1\right)m\left(d\xi\right) < \infty,
\end{equation}
\item a linear jump coefficient: a $d \times d$ matrix $\mu=\left(\mu_{ij}\right)$ of finite signed measures on $S_{d}^{+}\setminus \{0\}$ such that $\mu\left(E\right)\in S_{d}^{+}$ for all $E\in \mathscr{B}\left(S_{d}^{+}\setminus \{0\}\right)$ and the kernel
\begin{equation}\label{51.3f}
M\left(x,d\xi\right):=\frac{\langle x,\mu\left(d\xi\right)\rangle}{\parallel \xi \parallel^{2} \wedge 1}
\end{equation}
satisfies
\begin{equation}\label{51.3g}
\int_{S_{d}^{+}\setminus \{0\}}\langle \chi\left(\xi\right),u\rangle M\left(x,d\xi\right) < \infty \;\;\mbox{for all } x,u \in S_{d}^{+} \mbox{with } \langle x,u\rangle=0,
\end{equation}
\item a linear drift coefficient: a family $\beta^{ij}=\beta^{ji}\in S_{d}$ such that the linear map $B:S_{d}\rightarrow S_{d}$ of the form
\begin{equation}\label{51.3h}
B\left(x\right)=\sum_{i,j}\beta^{ij}x_{ij}
\end{equation}
satisfies
\begin{equation}\label{51.3i}
\langle B\left(x\right),u\rangle-\int_{S_{d}^{+}\setminus \{0\}}\langle \chi\left(\xi\right),u\rangle M\left(x,d\xi\right) \geq 0 \;\;\mbox{for all } x,u \in S_{d}^{+} \mbox{with } \langle x,u\rangle=0.
\end{equation}
\end{itemize}
\end{defn}

\begin{defn} \label{411.4} \textbf{\emph{Generator}} For an affine process $X$ taking values in $S_{d}^{+} \subset S_{d}$ the infinitesimal generator is defined as
\begin{equation}\label{51.4}
\mathscr{A}f\left(x\right)=\lim_{t\rightarrow 0^{+}}\frac{\mathbb{E}\left[f\left(X_{t}^{x}\right)\right]-f\left(x\right)}{t}\;\mbox{for }x\in S_{d}^{+},\;f\in\mathscr{C}^{2}\left(S_{d}, \mathbb{R}_{d}\right) \mbox{with bounded derivatives}.
\end{equation}
\end{defn}


\begin{thebibliography}{9}



\bibitem{Biffis1}
Biffis, E.
\textit{Affine {P}rocesses for {D}ynamic {M}ortality and {A}ctuarial {V}aluations}.
Insurance: Mathematics and Economics, 37(3):443-468, 2005.

\bibitem{Deelestra}
Deelstra, G., Grasselli, M., and Weverberg, C.V.
\textit{The {R}ole of the {D}ependence between {M}ortality and {I}nterest {R}ates
when {P}ricing {G}uaranteed {A}nnuity {O}ptions}.
Insurance: Mathematics and Economics, 71:205-219, 2016.

\bibitem{Liu1}
Liu, X., Mamon, R. and Gao, H.
\textit{A {G}eneralized {P}ricing {F}ramework {A}ddressing {C}orrelated {M}ortality and {I}nterest {R}isks: a change of {P}robability {M}easure {A}pproach}.
 Stochastics, 86(4):594-608, 2014.

\bibitem{Liu2}
Liu, X., Mamon, R. and Gao, H.
\textit{A {C}omonotonicity-based {V}aluation {M}ethod for {G}uaranteed {A}nnuity {O}ptions}.
Journal of Computational and Applied Mathematics, 250:58-69, 2013.

\bibitem{Jalen}
Jalen, L. and Mamon, R.
\textit{Valuation of {C}ontingent {C}laims with {M}ortality and {I}nterest {R}ate {R}isks}.
Mathematical and Computer Modelling, 49:1893-1904, 2009.

\bibitem{Miltersen}
Miltersen, K. and Persson, S.-A.
\textit{Is {M}ortality {D}ead? {S}tochastic {F}orward {F}orce of {M}ortality {D}etermined by {N}o {A}rbitrage}.
Working Paper, University of Bergen, 2005.

\bibitem{Nicolini}
Nicolini, E.
\textit{Mortality, {I}nterest {R}ates, {I}nvestment, and {A}gricultural {P}roduction in 18th {C}entury {E}ngland}.
Journal of Financial and Quantitative Analysis, 41(2):130-155, 2004.

\bibitem{Dacorogna}
Dacorogna, M. and Cadena, M.
\textit{Exploring the {D}ependence between {M}ortality and {M}arket {R}isks}.
SCOR Papers, 2015.

\bibitem{Dacorogna2}
Dacorogna, M. and  Apicella, G.
\textit{A General {F}ramework for {M}odeling {M}ortality to {B}etter {E}stimate its {R}elationship to {I}nterest {R}ate {R}isks}.
SCOR Papers, 2016.

\bibitem{QIS5}
QIS5
\textit{Quantitative {I}mpact {S}tudy 5:{T}echnical {S}pecifications}.
European Commission Internal Market and Services:Financial Institutions, Insurance and Pensions, Brussels, 2010, http://www.ceiops.eu/index.php? option=content\& task=view\& id=732.

\bibitem{IIL}
Insurance Institute of London.
\textit{{H}istory of {O}ptions under {L}ife {P}olicies}.
Insurance Institute of London, London, 1972.

\bibitem{Runhuan}
Fenga, R., Jing, X. and Dhaene, J.
\textit{Comonotonic {A}pproximations of {R}isk {M}easures for {V}ariable {A}nnuity {G}uaranteed {B}enefits with {D}ynamic {P}olicyholder {B}ehavior}.
Journal of Computational and Applied Mathematics, 311:272-292, 2017.


\bibitem{billie}
Billingsley, B.
\textit{Probability and {M}easure, 3rd ed.}.
Wiley, New York, 1995.

\bibitem{Protter}
Protter, P.
\textit{Stochastic {I}ntegration and {D}ifferential Equations, 2nd ed.}.
Springer-Verlag, Heidelberg, 1990.\

\bibitem{Lin2}
Lin, Y. and Cox, S.H.
\textit{Securitization of {M}ortality {R}isks in {L}ife {A}nnuities}.
Journal of Risk and Insurance, 72:227-252, 2005.

\bibitem{Bolton}
Bolton, M., Carr, D., Collis, P., George, C., Knowles, V. and Whitehouse, A.
\textit{Reserving for {A}nnuity {G}uarantees. {I}n: {T}he {R}eport of the {A}nnuity {G}uarantees}
Working Party, 1-36, 1997

\bibitem{Duffie2}
Duffie, D., Filipovi\'{c}, D. and Schachermayer, W.
\textit{Affine {P}rocesses and {A}pplications in {F}inance}
The Annals of Applied Probability, 13(3):984-1053, 2003.

\bibitem{Filipovic}
Filipovi\'{c}, D.
\textit{Time-inhomogeneous {A}ffine {P}rocesses}
Stochastic Processes and their Applications, 115:639-659, 2005.

\bibitem{Cuchiero1}
Cuchiero, C., Filipovi\'{c}, D., Mayerhofer, E. and Teichmann, J.
\textit{Affine {P}rocesses on {P}ositive {S}emidefinite {M}atrices}
The Annals of Applied Probability, 21(2):397-463, 2011.

\bibitem{Keller2}
Keller-Ressel, M. and Mayerhofer, E.
\textit{Exponential {M}oments of Affine {P}rocesses}
The Annals of Applied Probability, 25(2):714-752, 2015.

\bibitem{CIR}
Cox, J.C., Ingersoll, J.E. and Ross, S.A.
\textit{A {T}heory on the {T}erm {S}tructure of {I}nterest {R}ates}.
Econometrica, 53(2):385-407, 1985.

\bibitem{Vasicek}
Vasicek, O.
\textit{An {E}quilibrium {C}haracterization of the {T}erm {S}tructure}.
Journal of Financial Economics, 5:177-188, 1977.

\bibitem{martink7}
Keller-Ressel, M.
\textit{Affine {P}rocesses}.
Available [online] \url{https://www.math.tu-dresden.de/~mkeller/docs/affine_process_minicourse.pdf}, 2011.

\bibitem{Biffis2}
Biffis, E. and Millossovich, P.
\textit{The {F}air {V}alue of {G}uaranteed {A}nnuity {O}ption},
Scandinavian Actuarial Journal, 1:23-41, 2006.

\bibitem{Duffie1}
Duffie, D., Pan, J. and Singleton, K.J.
\textit{Transform {A}nalysis and {A}sset {P}ricing for {A}ffine {J}ump-diffusions},
Econometrica, 68(6):1343-1376, 2000.

\bibitem{Dhaene}
Dhaene, J., Kukush, A., Luciano, E., Schoutens, W. and Stassen, B.
\textit{On the ({I}n-) dependence between {F}inancial and  {A}ctuarial {R}isks}.
Insurance: Mathematics and Economics, 53(2):522-531, 2013.

\bibitem{Gnoatto1}
Gnoatto, A.
\textit{The {W}ishart {S}hort {R}ate {M}odel}.
International Journal of Theoretical and Applied Finance, 15(8), 2012.

\bibitem{Duffie0}
Duffie, D. and Kan, R.
\textit{A {Y}ield {F}actor {M}odel of {I}nterest {R}ates}
Mathematical Finance, 6(4):379-406, 1996.

\bibitem{Simon}
Simon, S., Goovaerts, M.J. and Dhaene, J.
\textit{An {E}asy {C}omputable {U}pper {B}ound for the {P}rice of an {A}rithmetic {A}sian {O}ption}.
Insurance: Mathematics and Economics, 26(2-3):175-184, 2000.

\bibitem{Deelstra2}
Deelstra, G., Diallo, I. and Vanmaele, M.
\textit{Bounds for {A}sian {B}asket {O}ptions}.
Journal of Computational and Applied Mathematics, 218:215-228, 2008.

\bibitem{1}
Dhaene, J., Denuit, M., Goovaerts, M., Kaas, R. and Vyncke, D.
\textit{The {C}oncept of {C}omonotonicity in {A}ctuarial {S}cience and {F}inance: {T}heory}.
Insurance: Mathematics and Economics, 31(1):3-33, 2002.

\bibitem{Upper}
Kaas, R., Dhaene, J. and Goovaerts, M.
\textit{Upper and {L}ower {B}ounds for {S}ums of {R}andom {V}ariables}.
Insurance: Mathematics and Economics, 27(2):151-168, 2000.

\bibitem{Dhaene2}
Dhaene, J., Wang, S., Young, V.R. and Goovaerts, M.J.
\textit{Comonotonicity and {M}aximum {S}top {L}oss {P}remiums}.
Bulletin of the Swiss Association of Actuaries, 2000(2):99-113, 2000.

\bibitem{prime2}
Hobson, D., Laurence, P. and Wang, T.H.
\textit{Static-arbitrage {U}pper {B}ounds for the {P}rices of {B}asket {O}ptions}.
Quantitative Finance, 5(4):329-342, 2005.

\bibitem{Deelstra3}
Deelstra, G., Liinev, J. and Vanmaele, M.
\textit{Pricing of {A}rithmetic {B}asket {O}ptions by {C}onditioning}.
Insurance: Mathematics and Economics, 34(1):55-77, 2004.

\bibitem{2}
Dhaene, J., Denuit, M., Goovaerts, M.J., Kaas, R. and Vyncke, D.
\textit{The {C}oncept of {C}omonotonicity in {A}ctuarial {S}cience and {F}inance: {A}pplications}.
Insurance: Mathematics and Economics, 31(2):133-161, 2002.

\bibitem{Carr}
Carr, P. and Madan, D.B.
\textit{Option {V}aluation {U}sing the {F}ast {F}ourier {T}ransform}.
Journal of Computational Finance, 2:61-73, 2000.

\bibitem{Daniel}
Dufresne, D.
\textit{The {I}ntegrated {S}quare-{R}oot {P}rocess}.
Research Paper 90, The University of Melbourne, 2001.

\bibitem{Bru2}
Bru, M.-F.
\textit{Wishart {P}rocesses}.
Journal of Theoretical Probability, [online] Available dx.doi.org/10.1007/BF01259552, 4(4):725-751, 1991.

\bibitem{Alfonsi}
Alfonsi, A.
\textit{Affine {D}iffusions and {R}elated {P}rocesses: {S}imulation, {T}heory and {A}pplications, Vol. 6}.
Bocconi and Springer Series, New York, 2015.

\bibitem{Keller}
Keller-Ressel, M.
\textit{Affine {P}rocesses-{T}heory and {A}pplications in {F}inance}.
Ph.D. Thesis, Vienna Institute of Technology, 2008.

\bibitem{Cuchiero}
Cuchiero, C.
\textit{Affine and {P}olynomial {P}rocesses}.
Ph.D. Thesis, ETH Z\"{u}rich, 2011.

\bibitem{Fonseca}
Da Fonseca, J., Grasselli, M. and Tebaldi, C.
\textit{A {M}ultifactor {V}olatility {H}eston {M}odel}.
Quantitative Finance, 8(6):591-604, 2008.

\bibitem{Grasselli}
Grasselli, M and Tebaldi, C.
\textit{Solvable {A}ffine {T}erm {S}tructure {M}odels}.
Mathematical Finance, 18(1):135-153, 2008.

\bibitem{Gnoatto2}
Gnoatto, A. and Grasselli, M.
\textit{The {E}xplicit {L}aplace {T}ransform for the {W}ishart {P}rocess}.
Journal of Applied Probability, 51:640-656, 2014.

\bibitem{Graselli5}
Gnoatto, A. and Grasselli, M.
\textit{The {B}asket {O}ption {L}aplace {T}ransform for the {W}ishart {P}rocess}.
Journal of Applied Probability, 51:640-656, 2014.

\bibitem{Grasselli5}
Caldana, R., Fusai, G., Gnoatto, A. and Grasselli, M.
\textit{General {C}losed-form {B}asket {O}ption {P}ricing {B}ounds}.
Quantitative Finance, 16(4):535-554, 2016.

\bibitem{Kang}
Kang, C. and Kang, W.
\textit{Exact {S}imulation of {W}ishart {M}ultidimensional {S}tochastic {V}olatility {M}odel}.
Working paper, 2013.

\bibitem{Pfaffel}
Pfaffel, O.
\textit{Wishart {P}rocesses}.
Student Research Project, Technische Universit{\"a}t M{\"u}nchen, 2008.

\bibitem{Gupta}
Gupta, A.K. and Nagar, D.K.
\textit{Matrix {V}ariate {D}istributions}.
Chapman \& Hall/CRC, 2000.

\bibitem{Dickson}
Dickson, D., Hardy, M. and Waters, R.
\textit{Actuarial {M}athematics for {L}ife {C}ontingent {R}isks}.
Cambridge University Press, 2013.

\bibitem{Chiarella}
Chiarella, C., Hsiao, C. and To, T.
\textit{Risk {P}remia and {W}ishart {T}erm {S}tructure {M}odels}.
Journal of Empirical Finance, 37, 59-78, 2016.


\end{thebibliography}
\end{document}